
\documentclass{article} 

\usepackage[final]{nips_2017}

\usepackage{multicol}
\usepackage{multirow}
\usepackage{algorithm}
\usepackage{algpseudocode} 
\usepackage{amsmath}
\usepackage{amsthm}
\usepackage{amssymb}
\usepackage{natbib}
\usepackage{rotating}
\usepackage{thmtools, thm-restate}
\usepackage{tikz}
\usepackage{url}

\usetikzlibrary{shapes,snakes}

\newcommand\Acomm[1]{\State {\color{antiquebrass}{// {#1}}}}

\newtheorem{theorem}{Theorem}[section]

\algdef{SE}[DOWHILE]{Do}{doWhile}{\algorithmicdo}[1]{\algorithmicwhile\ #1}%


\newcommand{\CLARANS}{\texttt{clarans}}
\newcommand{\clarans}{\texttt{clarans}}

\newcommand{\clara}{\texttt{clara}}
\newcommand{\lloyd}{\texttt{lloyd}}
\newcommand{\kmeanspp}{\texttt{km++}}

\newcommand{\pam}{\texttt{pam}}
\newcommand{\vik}{\texttt{vik}}
\newcommand{\BF}{\texttt{bf}}
\newcommand{\uni}{\texttt{uni}}
\newcommand{\medlloyd}{\texttt{medlloyd}}
\newcommand{\mcmc}{\texttt{afk-mc$^{2}$}}

\DeclareMathOperator*{\argmin}{arg\,min}

\DeclareMathOperator*{\dist}{dist}

\DeclareMathOperator*{\energy}{\psi}
\DeclareMathOperator*{\Energy}{\psi}

\definecolor{antiquebrass}{rgb}{0.44, 0.47, 0.77}

\title{K-Medoids for K-Means Seeding}

\author{
James Newling\\
Idiap Research Institue and \\
\'{E}cole polytechnique f\'{e}d\'{e}rale de Lausanne \\
\texttt{james.newling@idiap.ch} \\
\And
Fran\c{c}ois Fleuret \\
Idiap Research Institue and \\
\'{E}cole polytechnique f\'{e}d\'{e}rale de Lausanne \\
\texttt{francois.fleuret@idiap.ch} \\
}

\begin{document} 

\maketitle

\begin{abstract}

We show experimentally that the algorithm \clarans{} of \citet{clarans} finds better $K$-medoids solutions than the Voronoi iteration algorithm of \citet{the_elements}. This finding, along with the similarity between the Voronoi iteration algorithm and Lloyd's $K$-means algorithm, motivates us to use \CLARANS{} as a $K$-means initializer. We show that \CLARANS{} outperforms other algorithms on 23/23 datasets with a mean decrease over \texttt{k-means-++}~\citep{arthur_2007_kmeanspp} of $30 \%$ for initialization mean squared error (MSE) and $3 \%$ for final MSE. We introduce algorithmic improvements to \clarans{} which improve its complexity and runtime, making it a viable initialization scheme for large datasets. 

\end{abstract}

\section{Introduction}

\subsection{$K$-means and $K$-medoids}
The $K$-means problem is to find a partitioning of points, so as to minimize the sum of the squares of the distances from points to their assigned partition's mean. In general this problem is NP-hard, and in practice approximation algorithms are used. The most popular of these is Lloyd's algorithm, henceforth \lloyd{}, which alternates between freezing centers and assignments, while updating the other.  Specifically, in the \textit{assignment} step, for each point the nearest (frozen) center is determined. Then during the \textit{update} step, each center is set to the mean of points assigned to it.  \texttt{lloyd} has applications in data compression, data classification, density estimation and many other areas, and was recognised in~\citet{wu2008wu} as one of the top-10 algorithms in data mining.

The closely related $K$-medoids problem differs in that the center of a cluster is its medoid, not its mean, where the medoid is the cluster member which minimizes the sum of \textit{dissimilarities} between itself and other cluster members. In this paper, as our application is $K$-means initialization, we focus on the case where dissimilarity is squared distance, although $K$-medoids generalizes to non-metric spaces and arbitrary dissimilarity measures, as discussed in \S\ref{app::generalized}.

By modifying the update step in \lloyd{} to compute medoids instead of means, a viable $K$-medoids algorithm is obtained. This algorithm has been proposed at least twice~\citep{the_elements, park_2009_kmedoids} and is often referred to as the Voronoi iteration algorithm. We refer to it as \medlloyd{}. 

Another $K$-medoids algorithm is \clarans{} of \citet{clarans, clarans_other}, for which there is no direct $K$-means equivalent. It works by randomly proposing swaps between medoids and non-medoids, accepting only those which decrease MSE. We will discuss how \clarans{} works, what advantages it has over \medlloyd{}, and our motivation for using it for $K$-means initialization in \S\ref{sec::kmedoids} and \S\ref{app::generalized}. 

\subsection{$K$-means initialization}

\lloyd{} is a \textit{local} algorithm, in that far removed centers and points do not directly influence each other. This property contributes to \lloyd{}'s tendency to terminate in poor minima if not well initialized. Good initialization is key to guaranteeing that the refinement performed by \lloyd{} is done in the vicinity of a good solution, an example showing this is given in Figure~\ref{initex}.

\begin{figure}
\begin{center}
\tikzstyle{position}=[]
\begin{tikzpicture}[scale=.69]
\path node (x1) at (0,0.001) [position] {${\color{orange}{\star}}$}
      node (x2) at (2,0.001) [position] {${\color{orange}{\bigcirc}}$}
      node (x2) at (2,0.001) [position] {${\color{orange}{\star}}$}
      node (x3) at (3.3,0.001) [position] {${\color{blue}{\bigcirc}}$}
      node (x3) at (3.3,0.001) [position] {${\color{blue}{\star}}$};
\node [anchor=west, yshift = 0mm, xshift = -18mm] at (x1) {initial:};

\path node (x1) at (0,-0.523) [position] {${\color{orange}{\star}}$}
      node (x2) at (1.0,-0.523) [position] {${\color{orange}{\bigcirc}}$}
      node (x2) at (2,-0.523) [position] {${\color{orange}{\star}}$}
      node (x3) at (3.3,-0.523) [position] {${\color{blue}{\bigcirc}}$}
      node (x3) at (3.3,-0.523) [position] {${\color{blue}{\star}}$};
\node [anchor=west, yshift = 0mm, xshift = -18mm] at (x1) {final:};

\path node (x1) at (0 + 11.1112, 0.001) [position] {${\color{orange}{\star}}$}
      node (x2) at (0 + 11.1112, 0.001) [position] {${\color{orange}{\bigcirc}}$}
      node (x2) at (2 + 11.1112, 0.001) [position] {${\color{blue}{\star}}$}
      node (x3) at (3.3 + 11.1112, 0.001) [position] {${\color{blue}{\bigcirc}}$}
      node (x3) at (3.3 + 11.1112, 0.001) [position] {${\color{blue}{\star}}$};
\node [anchor=west, yshift = 0mm, xshift = -18mm] at (x1) {initial:};

\path node (x1) at (0 + 11.1112, -0.523) [position] {${\color{orange}{\star}}$}
      node (x2) at (0.0 + 11.1112, -0.523) [position] {${\color{orange}{\bigcirc}}$}
      node (x2) at (2 + 11.1112, -0.523) [position] {${\color{blue}{\star}}$}
      node (x3) at (2.65 + 11.1112, -0.523) [position] {${\color{blue}{\bigcirc}}$}
      node (x3) at (3.3 + 11.1112, -0.523) [position] {${\color{blue}{\star}}$};
\node [anchor=west, yshift = 0mm, xshift = -18mm] at (x1) {final:};
\end{tikzpicture}
\caption{$N=3$ points, to be partitioned into $K=2$ clusters with \lloyd{}, with two possible initializations (top) and their solutions (bottom). Colors denote clusters, stars denote samples, rings denote means. Initialization with \clarans{} enables jumping between the initializations on the left and right, ensuring that when \lloyd{} eventually runs it avoids the local minimum on the left.} 
\label{initex}
\end{center}
\vspace{-1em}

\end{figure}
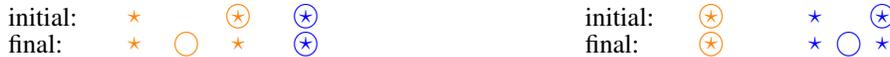

In the comparative study of $K$-means initialization methods of~\citet{celebi_2013_initialisation}, 8 schemes are tested across a wide range of datasets. Comparison is done in terms of speed (time to run initialization+\lloyd{}) and energy (final MSE). They find that 3/8 schemes should be avoided, due to poor performance. One of these schemes is uniform initialization, henceforth \uni{}, where $K$ samples are randomly selected to initialize centers. Of the remaining 5/8 schemes, there is no clear best, with results varying across datasets, but the authors suggest that the algorithm of \citet{bradley_fayyad}, henceforth \BF{}, is a good choice. 

The \BF{} scheme of~\citet{bradley_fayyad} works as follows. Samples are separated into $J$ $( = 10 )$ partitions. \lloyd{} with \uni{} initialization is performed on each of the partitions, providing $J$ centroid sets of size $K$. A superset of $JK$ elements is created by concatenating the $J$ center sets. \lloyd{} is then run $J$ times on the superset, initialized at each run with a distinct center set. The center set which obtains the lowest MSE on the superset is taken as the final initializer for the final run of \lloyd{} on all $N$ samples. 

Probably the most widely implemented initialization scheme other than \uni{} is \texttt{k-means++}~\citep{arthur_2007_kmeanspp}, henceforth \kmeanspp{}. Its popularity stems from its simplicity, low computational complexity, theoretical guarantees, and strong experimental support. The algorithm works by sequentially selecting $K$ seeding samples. At each iteration, a sample is selected with probability proportional to the square of its distance to the nearest previously selected sample. 

The work of~\citet{bachem16tortoise} focused on developing sampling schemes to accelerate \kmeanspp{}, while maintaining its theoretical guarantees. Their algorithm \mcmc{} results in as good initializations as \kmeanspp{}, while using only a small fraction of the $KN$ distance calculations required by \kmeanspp{}. This reduction is important for massive datasets. 

In none of the 4 schemes discussed is a center ever replaced once selected. Such refinement is only performed during the running of \lloyd{}. In this paper we show that performing refinement during initialization with \clarans{}, before the final \lloyd{} refinement, significantly lowers $K$-means MSEs.

\subsection{Our contribution and paper summary}
We compare the $K$-medoids algorithms \clarans{} and \medlloyd{}, finding that \clarans{} finds better local minima, in \S\ref{sec::simstudy} and \S\ref{app::generalized}. We offer an explanation for this, which motivates the use of \clarans{} for initializing \lloyd{} (Figure~\ref{swappower}). We discuss the complexity of \clarans{}, and briefly show how it can be optimised in \S\ref{sec::complexity}, with a full presentation of acceleration techniques in \S\ref{app::clarans}.

Most significantly, we compare \clarans{} with methods \uni{}, \BF{}, \kmeanspp{} and \mcmc{} for $K$-means initialization, and show that it provides significant reductions in initialization and final MSEs in \S\ref{sec::mainresults}. We thus provide a conceptually simple initialization scheme which is demonstrably better than \kmeanspp{}, which has been the de facto initialization method for one decade now. 

Our source code at \url{https://github.com/idiap/zentas} is available under an open source license. It consists of a C++ library with Python interface, with several examples for diverse data types (sequence data, sparse and dense vectors), metrics (Levenshtein, $l_{1}$, etc.) and potentials (quadratic as in $K$-means, logarithmic, etc.).

\subsection{Other Related Works} 
Alternatives to \lloyd{} have been considered which resemble the swapping approach of \clarans{}. One is by~\cite{Hartigan_1975_book}, where points are randomly selected and reassigned. \cite{telgarsky_2010_hartigan} show how this heuristic can result in better clustering when there are few points per cluster.  

The work most similar to \clarans{} in the $K$-means setting is that of~\cite{Kanungo_2002_local}, where it is indirectly shown that \clarans{} finds a solution within a factor 25 of the optimal $K$-medoids clustering. The local search approximation algorithm they propose is a hybrid of \clarans{} and \lloyd{}, alternating between the two, with sampling from a kd-tree during the \clarans{}-like step. Their source code includes an implementation of an algorithm they call `Swap', which is exactly the \clarans{} algorithm of \cite{clarans}. 

\section{Two $K$-medoids algorithms}
\label{sec::kmedoids}
Like \kmeanspp{} and \mcmc{}, $K$-medoids generalizes beyond the standard $K$-means setting of Euclidean metric with quadratic potential, but we consider only the standard setting in the main body of this paper, referring the reader to~\ref{app::generalized} for a more general presentation. In Algorithm~\ref{alg::medlloyd}, \medlloyd{} is presented. It is essentially \lloyd{} with the update step modified for $K$-medoids.

\begin{multicols}{2}

\begin{algorithm}[H]
\begin{algorithmic}[1]
\input{medlloyd.alg}
\end{algorithmic}
\caption{two-step iterative \medlloyd{} algorithm (in vector space with quadratic potential).}
\label{alg::medlloyd}
\end{algorithm}

\columnbreak

\begin{algorithm}[H]
\begin{algorithmic}[1]
\input{clarans.alg}
\end{algorithmic}
\caption{swap-based \CLARANS{} algorithm (in a vector space and with quadratic potential).}
\label{alg::CLARANS101}
\end{algorithm}

\end{multicols}

In Algorithm~\ref{alg::CLARANS101}, \clarans{} is presented. Following a random initialization of the $K$ centers (line 2), it proceeds by repeatedly proposing a random swap (line 5) between a center $(i_-)$ and a non-center $(i_+)$. If a swap results in a reduction in energy (line 8), it is implemented (line 9). 
\clarans{}  terminates when $N_r$ consecutive proposals have been rejected. Alternative stopping criteria could be number of accepted swaps, rate of energy decrease or time.  We use $N_r = K^2$ throughout, as this makes proposals between all pairs of clusters probable, assuming balanced cluster sizes.

\clarans{} was not the first swap-based $K$-medoids algorithm, being preceded by \pam{} and \clara{} of \citet{generic_kmedoids}. It can however provide better complexity than other swap-based algorithms if certain optimisations are used, as discussed in \S\ref{sec::complexity}. 

When updating centers in \lloyd{} and \medlloyd{}, assignments are frozen. In contrast, with swap-based algorithms such as \clarans{}, assignments change along with the medoid index being changed ($i_-$ to $i_+$). As a consequence, swap-based algorithms look one step further ahead when computing MSEs, which helps them escape from the minima of \medlloyd{}. This is described in Figure~\ref{swappower}.

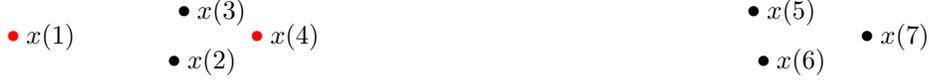
\begin{figure}
\center
\tikzstyle{position}=[]
\begin{tikzpicture}[scale=1.0]
\path node (c1) at (1.6221*0,0) [position] {$\color{red}{\bullet}$}
      node (c2) at (1.6221*2,0) [position] {$\color{red}{\bullet}$}
      node (x1) at (1.6221*1.4, .331211) [position] {$\bullet$}
      node (x2) at (1.6221*1.32, -.331211) [position] {$\bullet$}
      node (y1) at (1.6221*6.15, -.331211) [position] {$\bullet$}
      node (y2) at (1.6221*6.07, .331211) [position] {$\bullet$}
      node (y3) at (1.6221*7, 0) [position] {$\bullet$};

\node [xshift=5mm]              at (c1) {$x(1)$};
\node [xshift=5mm]              at (x2) {$x(2)$};
\node [xshift=5mm]              at (x1) {$x(3)$};
\node [xshift=5mm]              at (c2) {$x(4)$};
\node [xshift=5mm]              at (y2) {$x(5)$};
\node [xshift=5mm]              at (y1) {$x(6)$};
\node [xshift=5mm]              at (y3) {$x(7)$};
\end{tikzpicture}
\caption{Example with $N = 7$ samples, of which $K = 2$ are medoids. Current medoid indices are $1$ and $4$. Using \medlloyd{}, this is a local minimum, with final clusters $\{x(1)\}$, and the rest. \clarans{} may consider swap $(i_-,  i_+) = (4, 7)$ and so escape to a lower MSE. The key to swap-based algorithms is that cluster assignments are never frozen. Specifically, when considering the swap of $x(4)$ and $x(7)$, \clarans{} assigns $x(2), x(3)$ and $x(4)$ to the cluster of $x(1)$ \textit{before} computing the new MSE.}
\label{swappower}
\end{figure}

\section{A Simple Simulation Study for Illustration}
\label{sec::simstudy}

We generate simple 2-D data, and compare \medlloyd{}, \clarans{}, and baseline $K$-means initializers \kmeanspp{} and \uni{}, in terms of MSEs. The data is described in Figure~\ref{fig::gridinits}, where sample initializations are also presented. Results in Figure~\ref{three_methods_time} show that \clarans{} provides significantly lower MSEs than \medlloyd{}, an observation which generalizes across data types (sequence, sparse, etc), metrics (Levenshtein, $l_\infty$, etc), and potentials (exponential, logarithmic, etc), as shown in Appendix~\ref{app::generalized}.

\begin{figure}
\begin{minipage}[c]{0.64\textwidth}
\includegraphics[width=1.0\columnwidth]{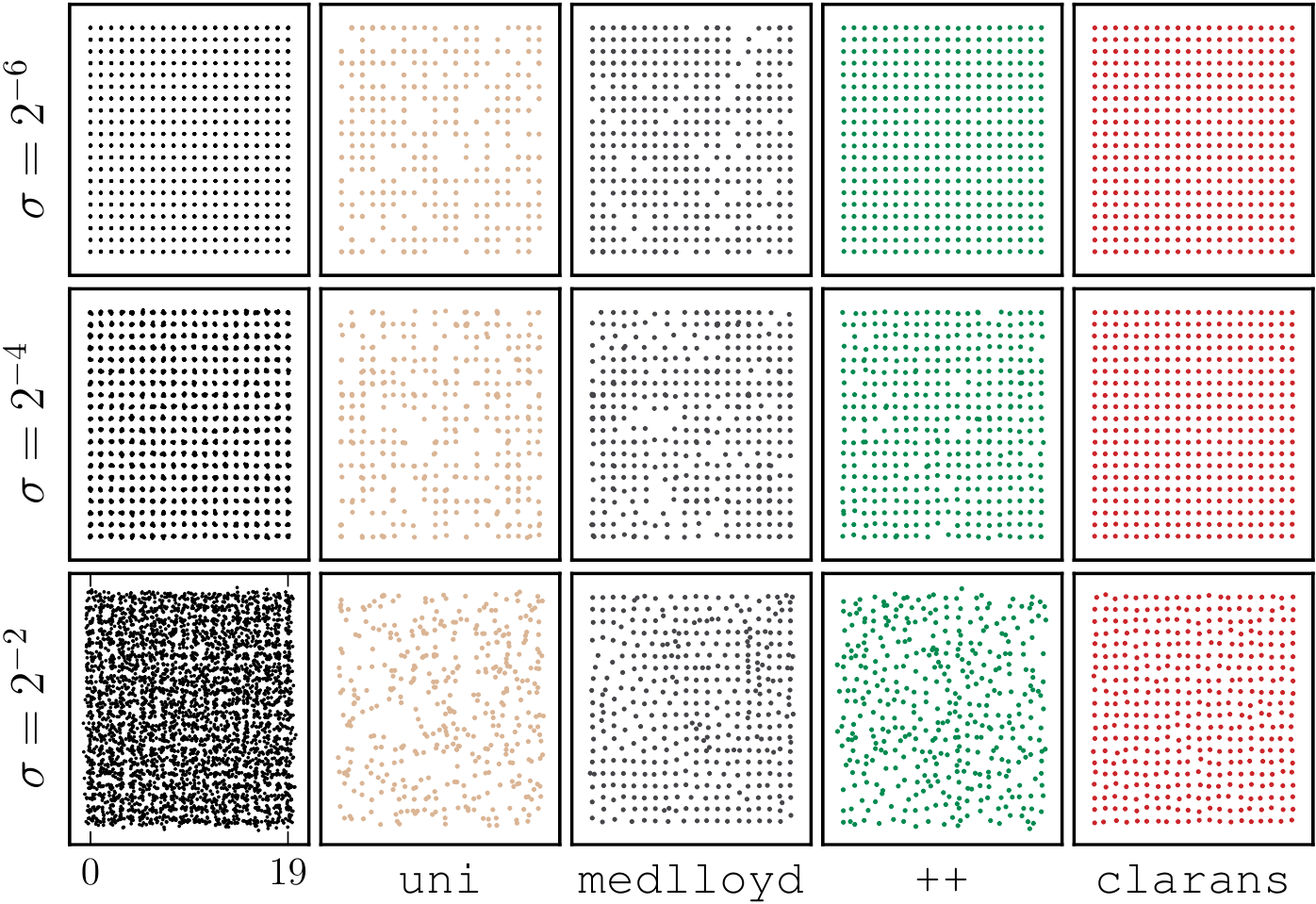}
\end{minipage}\hfill
\begin{minipage}[c]{0.34\textwidth}
\caption{(\textit{Column 1}) Simulated data in $\mathbf{R}^2$. For each cluster center $g \in \{0, \ldots, 19 \}^2 $, 100 points are drawn from $\mathcal{N}(g, \sigma^2 I)$, illustrated here for $\sigma \in \{2^{-6}, 2^{-4}, 2^{-2}\}$. (\textit{Columns 2,3,4,5}) Sample initializations. We observe `holes' for methods \texttt{uni}, \texttt{medlloyd} and \kmeanspp{}. \clarans{} successfully fills holes by removing distant, under-utilised centers. The spatial correlation of \medlloyd{}'s holes are due to its locality of updating.  }
\label{fig::gridinits}
\end{minipage}
\end{figure}

\begin{figure}
\begin{center}
\includegraphics[width=1.0\columnwidth]{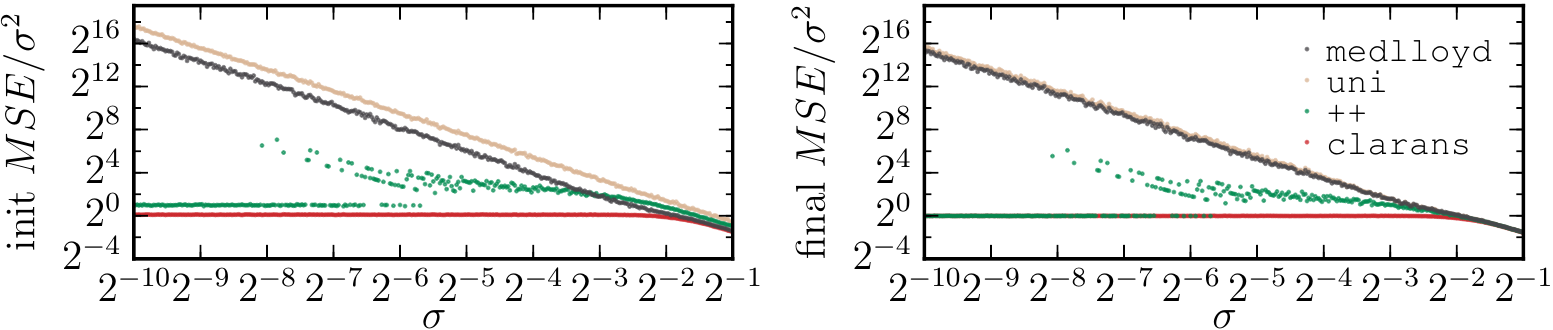}
\caption{Results on simulated data. For 400 values of $\sigma \in [2^{-10}, 2^{-1}]$, initialization (left) and final (right) MSEs relative to true cluster variances. For $\sigma \in [2^{-5}, 2^{-2}]$ \kmeanspp{} never results in minimal MSE ($MSE/\sigma^2 = 1$), while \clarans{} does for all $\sigma$. Initialization MSE with \medlloyd{} is on average 4 times lower than with \uni{}, but most of this improvement is regained when \lloyd{} is subsequently run (final $MSE/\sigma^2$).}
\label{three_methods_time}
\end{center}
\end{figure}

\section{Complexity and Accelerations}
\label{sec::complexity}
\lloyd{} requires $KN$ distance calculations to update $K$ centers, assuming no acceleration technique such as that of~\citet{elkan_2003_kmeansicml} is used. The cost of several iterations of \lloyd{} outweighs initialization with any of \uni{}, \kmeanspp{} and \mcmc{}. We ask if the same is true with \clarans{} initialization, and find that the answer depends on how \clarans{} is implemented. \clarans{} as presented in~\citet{clarans} is $O(N^2)$ in computation and memory, making it unusable for large datasets. To make \clarans{} scalable, we have investigated ways of implementing it in $O(N)$ memory, and devised optimisations which make its complexity equivalent to that of \lloyd{}. 


\clarans{} consists of two main steps. The first is swap \textit{evaluation} (line 6) and the second is swap \textit{implementation} (scope of if-statement at line 8). Proposing a good swap becomes less probable as MSE decreases, thus as the number of swap implementations increases the number of consecutive rejected proposals $(n_r)$ is likely to grow large, illustrated in Figure~\ref{eval_vs_impl}. This results in a larger fraction of time being spent in the evaluation step.  

We will now discuss optimisations in order of increasing algorithmic complexity, presenting their computational complexities in terms of evaluation and implementation steps. The explanations here are high level, with algorithmic details and pseudocode deferred to \S\ref{app::clarans}.

\begin{figure}
\begin{center}
\includegraphics[width=1.0\columnwidth]{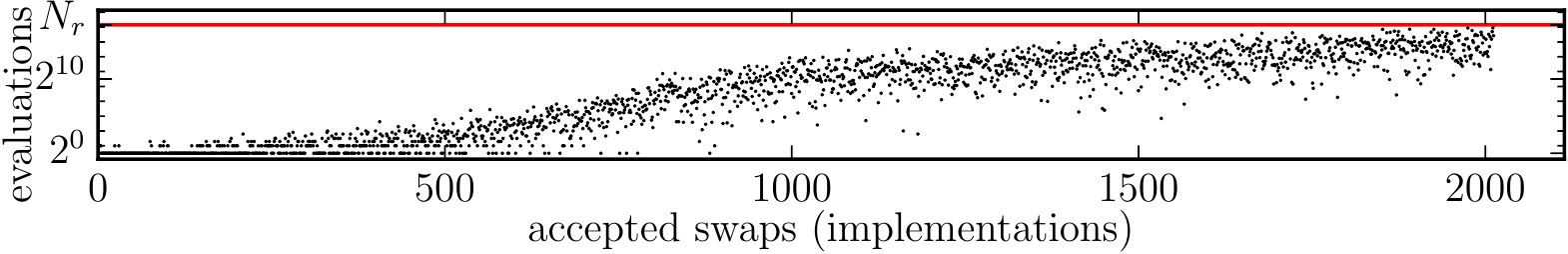}
\caption{The number of consecutive swap proposal rejections (evaluations) before one is accepted (implementations), for simulated data (\S\ref{sec::simstudy}) with $\sigma = 2^{-4}$.}
\label{eval_vs_impl}
\end{center}
\vspace{-1em}
\end{figure}

\textbf{Level -2}\;\; To evaluate swaps (line 6), simply compute all $KN$ distances.

\textbf{Level -1}\;\; Keep track of nearest centers. Now to evaluate a swap, samples whose nearest center is $x(i_-)$ need distances to all $K$ samples indexed by $\mathcal{C} \setminus \{i_-\} \cup \{i_+\}$ computed in order to determine the new nearest. Samples whose nearest is not $x(i_-)$ only need the distance to $x(i_+)$ computed to determine their nearest, as either, (1) their nearest is unchanged, or (2) it is $x(i_+)$. 

\textbf{Level 0}\;\; Also keep track of second nearest centers, as in the implementation of \citet{clarans}, which recall is $O(N^2)$ in memory and computes all distances upfront. Doing so, nearest centers can be determined for \textit{all} samples by computing distances to $x(i_+)$. If swap $(i_-, i_+)$ is accepted, samples whose new nearest is $x(i_+)$ require $K$ distance calculations to recompute second nearests. Thus from level -1 to 0, computation is transferred from evaluation to implementation, which is good, as implementation is less frequently performed, as illustrated in Figure~\ref{eval_vs_impl}. 

\textbf{Level 1}\;\; Also keep track, for each cluster center, of the distance to the furthest cluster member as well as the maximum, over all cluster members, of the minimum distance to another center. Using the triangle inequality, one can then frequently eliminate computation for clusters which are unchanged by proposed swaps with just a single center-to-center distance calculation. Note that using the triangle inequality requires that the $K$-medoids dissimilarity is metric based, as is the case in the $K$-means initialization setting.

\textbf{Level 2}\;\; Also keep track of center-to-center distances. This allows whole clusters to be tagged as unchanged by a swap, without computing any distances in the evaluation step.

We have also considered optimisations which, unlike levels -2 to 2, do not result in the exact same clustering as \clarans{}, but provide additional  acceleration. One such optimisation uses random sub-sampling to evaluate proposals, which helps significantly when $N/K$ is large. Another optimisation which is effective during initial rounds is to \textit{not} implement the first MSE reducing swap found, but to rather continue searching for approximately as long as swap implementation takes, thus balancing time between searching (evaluation) and implementing swaps. Details can be found in \S\ref{sec:level3}.

The computational complexities of these optimisations are in Table~\ref{tab::complexity}. Proofs of these complexities rely on there being $O(N/K)$ samples changing their nearest or second nearest center during a swap. In other words, for any two clusters of sizes $n_1$ and $n_2$, we assume $n_1 = \Omega(n_2)$. Using level 2 complexities, we see that if a fraction $p(\mathcal{C})$ of proposals reduce MSE, then the expected complexity is $O( N(1 + 1/(p(\mathcal{C})K)))$. One cannot marginalise $\mathcal{C}$ out of the expectation, as $\mathcal{C}$ may have no MSE reducing swaps, that is $p(\mathcal{C}) = 0$. If $p(\mathcal{C})$ is $O(K)$, we obtain complexity $O(N)$ per swap, which is equivalent to the $O(KN)$ for $K$ center updates of \lloyd{}. In Table~\ref{tab::distcalcs}, we consider run times and distance calculation counts on simulated data at the various levels of optimisation.

\begin{table}
\addtolength{\tabcolsep}{+2pt}  
\begin{tabular}{cccccc}
\hline
 & -2 & -1 & 0 & 1 & 2 \\ 
\hline
1 evaluation & $NK$ & $N$ & $N$ & $\frac{N}{K} + K$ & $\frac{N}{K}$ \\ 
1 implementation & $1$ & $1$ & $N$ & $N$ & $N$ \\ 
$K^2$ evaluations, $K$ implementations & $K^3N$ & $K^2N$ & $K^2N$ & $NK + K^3$ & $KN$ \\ 
memory & $N$ & $N$ & $N$ & $N$ & $N + K^2$ \\ 

\hline
\end{tabular}
\addtolength{\tabcolsep}{-2pt}  
\vspace{1em}
\caption{The complexities at different levels of optimisation of \textit{evaluation} and \textit{implementation}, in terms of required distance calculations, and overall memory. We see at level 2 that to perform $K^2$ evaluations and $K$ implementations is $O(KN)$, equivalent to \lloyd{}.} 
\vspace{-1em} 
\label{tab::complexity}
\end{table}

\begin{table}
\begin{minipage}[c]{0.4\textwidth}
\addtolength{\tabcolsep}{-2pt}  
\begin{tabular}{cccccc}
\hline
 & -2 & -1 & 0 & 1 & 2 \\ 
\hline
$\log_2 ( \#\mbox{\,dcs\,})$ & $44.1$ & $36.5$ & $35.5$ & $29.4$ & $26.7$ \\ 
time [s] & - & - & 407 & 19.2 & 15.6 \\ 
\hline
\end{tabular}
\addtolength{\tabcolsep}{+2pt}  
\end{minipage}\hfill
\begin{minipage}[c]{0.50\textwidth}
\vspace{1em}
\caption{Total number of distance calculations ($\#\mbox{\,dcs\,}$) and time required by \clarans{} on simulation data of \S\ref{sec::simstudy} with $\sigma = 2^{-4}$  at different optimisation levels. } 
 \label{tab::distcalcs}
\end{minipage}
\end{table}

\section{Results}
\label{sec::mainresults}

\begin{table}[h!]
\addtolength{\tabcolsep}{-1.2pt}  
\input{datasets.table}
\addtolength{\tabcolsep}{+1.2pt}
\vspace{1em}
\caption{The 23 datasets. Column `TL' is time allocated to run with each initialization scheme, so that no new runs start after TL elapsed seconds. The starred datasets are those used in~\citet{bachem16tortoise}, the remainder are available at \protect\url{https://cs.joensuu.fi/sipu/datasets}.}
\label{tab:datasets}
\end{table}

We first compare \clarans{} with \uni{}, \kmeanspp{}, \mcmc{} and \BF{} on the first 23 publicly available datasets in Table~\ref{tab:datasets} (datasets 1-23).  
As noted in \citet{celebi_2013_initialisation}, it is common practice to run initialization+\lloyd{} several time and retain the solution with the lowest MSE. In~\citet{bachem16tortoise} methods are run a fixed number of times, and \textit{mean} MSEs are compared. However, when comparing \textit{minimum} MSEs over several runs, one should take into account that methods vary in their time requirements.

Rather than run each method a fixed number of times, we therefore run each method as many times as possible in a given time limit, `TL'. This dataset dependent time limit, given by columns TL in Table~\ref{tab:datasets}, is taken as 80$\times$ the time of a single run of \kmeanspp{}+\lloyd{}. The numbers of runs completed in time TL by each method are in columns 1-5 of Table~\ref{tab::mega}. Recall that our stopping criterion for \clarans{} is $K^2$ consecutively rejected swap proposals. We have also experimented with stopping criterion based on run time and number of swaps implemented, but find that stopping based on number of rejected swaps best guarantees convergence. We use $K^2$ rejections for simplicity, although have found that fewer than $K^2$ are in general needed to obtain minimal MSEs.

We use the fast \lloyd{} implementation accompanying~\citet{fast_bounds} with the `auto' flag set to select the best exact accelerated algorithm, and run until complete convergence. For initializations, we use our own C++/Cython implementation of level 2 optimised \clarans{}, the implementation of \mcmc{} of \citet{bachem16tortoise}, and \kmeanspp{} and \BF{} of~\citet{fast_bounds}.

The objective of~\citet{bachem16tortoise} was to prove and experimentally validate that~\mcmc{} produces initialization MSEs equivalent to those of \kmeanspp{}, and as such \lloyd{} was not run during experiments. We consider both initialization MSE, as in \citet{bachem16tortoise}, and final MSE after \lloyd{} has run. The latter is particularly important, as it is the objective we wish to minimize in the $K$-means problem.

In addition to considering initialization and final MSEs, we also distinguish between mean and minimum MSEs. We believe the latter is important as it captures the varying time requirements, and as mentioned it is common to run \lloyd{} several times and retain the lowest MSE clustering. In Table~\ref{tab::mega} we consider two MSEs, namely mean initialization MSE and minimum final MSE.

\begin{table}
\centering
\addtolength{\tabcolsep}{-1.01pt}  
\input{allresults.table}
\addtolength{\tabcolsep}{+1.01pt}  
\vspace{1em}
\caption{Summary of results on the 23 datasets (rows). Columns 1 to 5 contain the number of initialization+\lloyd{} runs completed in time limit TL. Columns 6 to 14 contain MSEs relative to the mean initialization MSE of \kmeanspp{}. Columns 6 to 9 are mean MSEs after initialization but before \lloyd{}, and columns 10 to 14 are minimum MSEs after \lloyd{}. The final row (gm) contains geometric means of all columns.  \clarans{} consistently obtains the lowest across all MSE measurements, and has a 30\% lower initialization MSE than \kmeanspp{} and \mcmc{}, and a 3\% lower final minimum MSE. }
\label{tab::mega}
\end{table}

\begin{figure}
\begin{center}
\includegraphics[width=1.0\columnwidth]{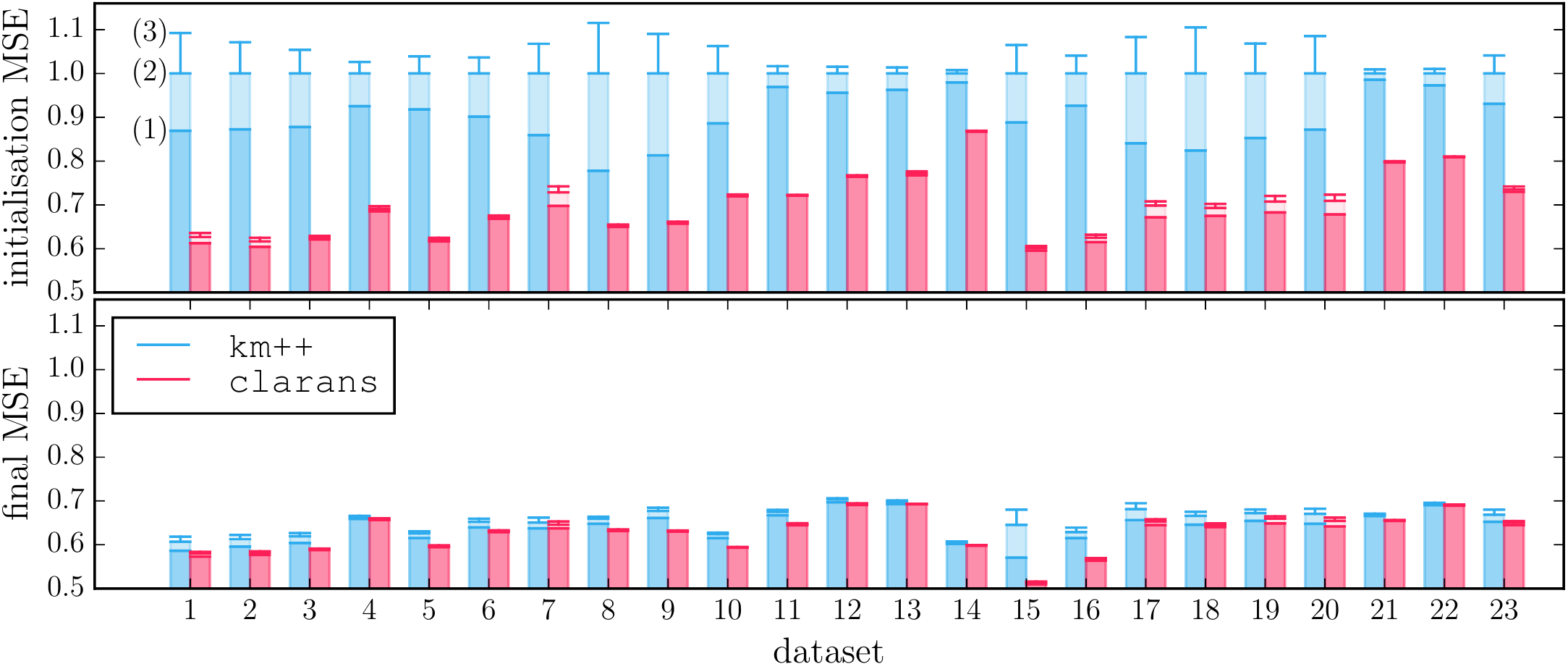}
\caption{Initialization (above) and final (below) MSEs for \kmeanspp{} (left bars) and \clarans{} (right bars), with minumum (1), mean (2) and mean + standard deviation (3) of MSE across all runs. For all initialization MSEs and most final MSEs, the lowest \kmeanspp{} MSE is several standard deviations higher than the mean \clarans{} MSE.   } 
\label{bothMSEs}
\end{center}
\end{figure}

\subsection{Baseline performance}

We briefly discuss findings related to algorithms \uni{}, \BF{}, \mcmc{} and \kmeanspp{}. Results in Table~\ref{tab::mega} corroborate the previously established finding that \uni{} is vastly outperformed by \kmeanspp{}, both in initialization and final MSEs. Table~\ref{tab::mega} results also agree with the finding of \citet{bachem16tortoise} that initialization MSEs with~\mcmc{} are indistinguishable from those of \kmeanspp{}, and moreover that final MSEs are indistinguishable. We observe in our experiments that runs with \kmeanspp{} are faster than those with \mcmc{} (columns 1 and 2 of Table~\ref{tab::mega}). We attribute this to the fast blas-based \kmeanspp{} implementation of~\citet{fast_bounds}. 

Our final baseline finding is that MSEs obtained with \BF{} are in general no better than those with \uni{}. This is not in strict agreement with the findings of \citet{celebi_2013_initialisation}. We attribute this discrepancy to the fact that experiments in \citet{celebi_2013_initialisation} are in the low $K$ regime $(K <  50, \, N/K > 100)$. Note that Table~\ref{tab::mega} does not contain initialization MSEs for \BF{}, as \BF{} does not initialize with data points but with means of sub-samples, and it would thus not make sense to compare \BF{} initialization with the 4 seeding methods.

\subsection{\clarans{} performance}
Having established that the best baselines are \kmeanspp{} and \mcmc{}, and that they provide clusterings of indistinguishable quality, we now focus on the central comparison of this paper, that between \kmeanspp{} with \clarans{}. In Figure~\ref{bothMSEs} we present bar plots summarising all runs on all 23 datasets. We observe a very low variance in the initialization MSEs of \clarans{}. We speculatively hypothesize that \clarans{} often finds a globally minimal initialization.  Figure~\ref{bothMSEs} shows that \clarans{} provides significantly lower initialization MSEs than \kmeanspp{}.

The final MSEs are also significantly better when initialization is done with \clarans{}, although the gap in MSE between \clarans{} and \kmeanspp{} is reduced when \lloyd{} has run. Note, as seen in Table~\ref{tab::mega}, that all 5 initializations for dataset 7 result in equally good clusterings.    

As a supplementary experiment, we considered initialising with \kmeanspp{} and \clarans{} in series, thus using the three stage clustering \kmeanspp{}+\clarans{}+\lloyd{}. We find that this can be slightly faster than just \clarans{}+\lloyd{} with identical MSEs. Results of this experiment are presented in \S\ref{sec::precede2}. We perform a final experiment measure the dependence of improvement on $K$ in \S\ref{sec::precede2}, where we see the improvement is most significant for large $K$.  

\section{Conclusion and Future Works}
In this paper, we have demonstrated the effectiveness of the algorithm \clarans{} at solving the $k$-medoids problem. We have described techniques for accelerating \clarans{}, and most importantly shown that \clarans{} works very effectively as an initializer for \lloyd{}, outperforming other initialization schemes, such as \kmeanspp{}, on 23 datasets.

An interesting direction for future work might be to develop further optimisations for \clarans{}. One idea could be to use importance sampling to rapidly obtain good estimates of post-swap energies. Another might be to propose two swaps simultaneously, as considered in~\cite{Kanungo_2002_local}, which could potentially lead to even better solutions, although we have hypothesized that \clarans{} is already finding globally optimal initializations. 

All source code is made available under a public license. It consists of generic C++ code which can be extended to various data types and metrics, compiling to a shared library with extensions in Cython for a Python interface. It can currently be found in the git repository \url{https://github.com/idiap/zentas}.

\section*{Acknowledgments}
James Newling was funded by the Hasler Foundation under the grant 13018 MASH2.

\newpage

\bibliographystyle{hapalike}

\bibliography{initclarans} 

\clearpage

\renewcommand{\thesection}{SM-\Alph{section}}

\setcounter{section}{0}

\onecolumn

\section{Generalised $K$-Medoids Results}
\label{app::generalized}

The potential uses of \clarans{} as a $K$-medoids algorithm go well beyond $K$-Means initialization. In this Appendix, we wish to demonstrate that \clarans{} should be chosen as a default $K$-medoids algorithm, rather than \medlloyd{}. In its most general form, the $K$-medoids problem is to minimize,
\begin{equation}
\label{eq:E}
E(\mathcal{C}) = \frac{1}{N}\sum_{i = 1}^{N} \argmin_{i' \in \mathcal{C}} f(x(i),x(i')).
\end{equation}
We assume that $f$ is of the form,
\begin{equation}
\label{fdecomp}
f(x(i), x(i')) = \energy(\dist(x(i),x(i'))),
\end{equation}
where $\energy$ is non-decreasing, and samples belong to a metric space with metric $\dist(\cdot, \cdot)$. Constraint~\ref{fdecomp} allows us to use the triangle inequality to eliminate certain distance calculations. We now present examples comparing \clarans{} and \medlloyd{} in various settings, showing the effectiveness of \clarans{}. Table~\ref{tab:syn_data} describes artificial problems, with results in Figure~\ref{vik_clarans_sim}. Table~\ref{tab:real_data_kmedoids} describes real-world problems, with results in Figure~\ref{vik_clarans_real}.

\begin{table}[h!]

\begin{center}
\begin{tabular}{|c|ccccc|}
\hline
& N & K & type & metric & $\energy(d)$ \\
\hline
syn-1 & 2000 & 40 & sequence & Levenshtein & $d$ \\
syn-2 & 20000 & 100 & sparse-v & $l_2$ & $d^2$ \\
syn-3 & 28800 & 144 & dense-v & $l_1$ & $e^d$ \\
syn-4 & 20000 & 100 & dense-v & $l_{\infty}$ & $I_{d > 0.05}$ \\
\hline
\end{tabular}
\end{center}

\caption{Synthetic datasets used for comparing $K$-medoids algorithms (Figure~\ref{vik_clarans_sim}).  
\textbf{syn-1}: Each of the cluster centers is a random binary sequence of 16 bits (0/1). In each of the clusters, 50 elements are generated by applying 2 mutations (insert/delete/replacement) to the center, at random locations.  
\textbf{syn-2}: Each of the centers is a vector in $\mathbb{R}^{10^6}$, non-zero at exactly 5 indices, with the 5 non-zero values drawn from $N(0,1)$. Each sample is a linear combination of two centers, with coefficients $1$ and $Q$ respectively, where $Q\sim U[-0.5,0.5]$.  
\textbf{syn-3}: Centers are integer co-ordinates of an $12\times12$ grid.  For each center, 50 samples are generated, each sample being the center plus Gaussian noise of identity covariance, as in the simulation data in the main text.  
\textbf{syn-4}: Data are points drawn uniformly from $[0,1]^2$. We attempt cover a unit square with $100$ squares of diameter $0.1$, a task with a unique lattice solution. Points not covered have energy $1$, while covered points have energy $0$.  
}

\label{tab:syn_data}
\end{table}

\begin{table}
\begin{center}
\begin{tabular}{|c|ccccc|}
\hline
& N & K & type & metric & $\energy(d)$ \\
\hline
rcv1 & 23149 & 400 & sparse-v & $l_2$ & $d^2$ \\
genome & 400000 & 1000 & sequence & n-Levensh. & $d^2$ \\
mnist & 10000 & 400 & dense-v & $l_2$ & $d^2$ \\
words &  354983 & 1000 & sequence & Levenshtein & $d^2$ \\
\hline
\end{tabular}
\end{center}

\caption{Real datasets used for comparing $K$-medoids algorithms (Figure~\ref{vik_clarans_real}), with data urls in~\ref{urls}.  
\textbf{rcv1}: The Reuters Corpus Volume I training set of \cite{Lewis_2004_rcv1}, a sparse datasets containing news article categorisation annotation. 
\textbf{genome}: Nucleotide subsequences of lengths 10,11 or 12, randomly selected from chromosome 10 of a Homo Sapiens. Note that the normalised Levenshtein metric~\citep{normalised_Levenshtein} is used.
\textbf{mnist}: The test images of the MNIST hand-written digit dataset.
\textbf{words}: A comprehensive English language word list.  
}
\label{tab:real_data_kmedoids}
\end{table}

\begin{figure}[h!]
\begin{center}
\includegraphics[width=0.6\columnwidth]{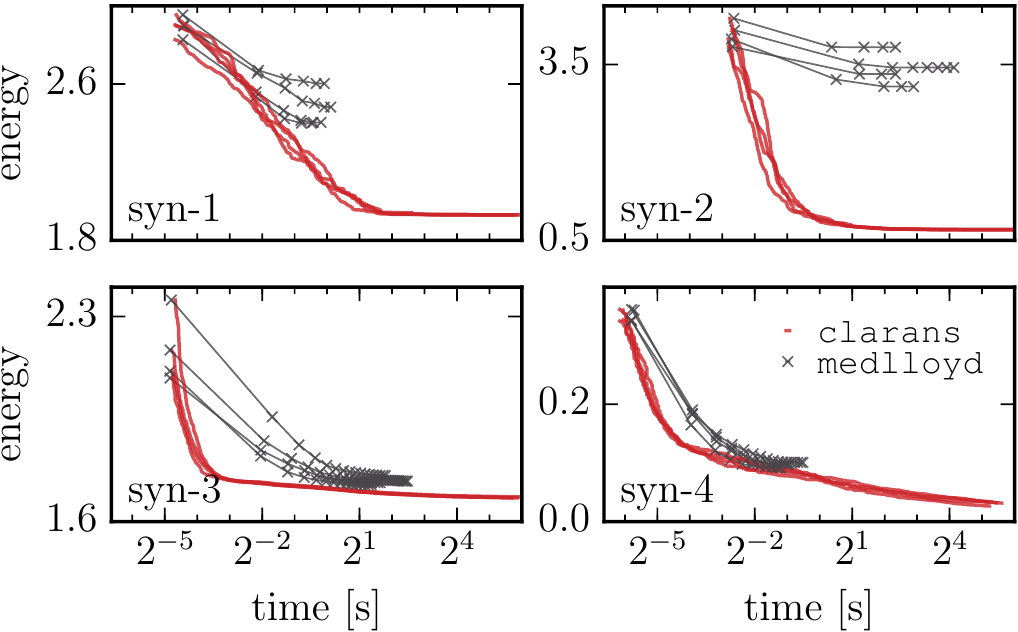}
\caption{Results on synthetic datasets. Algorithms \clarans{} and \medlloyd{} are run four times with random seedings. Each experiment is run with a time limit of 64 seconds. The vertical axis is mean energy (dissimilarity) across samples. In all experiments, \medlloyd{} gets trapped in local minima before 64 seconds have elapsed, and \clarans{} always obtains significantly lower energies than \medlloyd{}.}
\label{vik_clarans_sim}
\end{center}
\end{figure}

\begin{figure}[h!]
\begin{center}
\includegraphics[width=0.6\columnwidth]{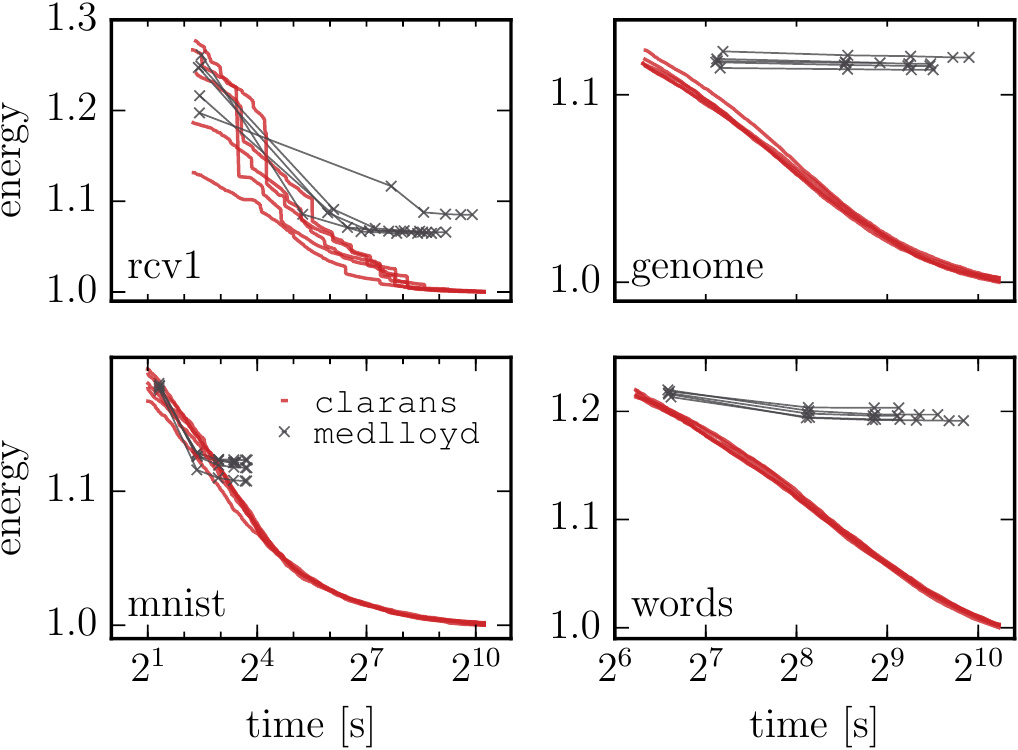}
\caption{Results on real datasets. Vertical axes are energies relative to the lowest energy found. We observe that \medlloyd{} performs very poorly on sequence datasets (right), failing to find clusterings significantly better than the random initializations. While an improvement over the initial seeding is obtained using \vik{} on the vector datasets (left), the energies obtained using \clarans{} are significantly lower. Runs with \clarans{} appear to converge to a common energy solution, even though initial energies vary greatly, as is the case in on dataset rcv1. The majority of runs with \medlloyd{} converge to a local minimum before the allotted time limit of $2^{10}$ seconds. }

\label{vik_clarans_real}
\end{center}
\vskip -0.2in
\end{figure}

\section{The task}
\label{sec:the_task}
We state precisely the $K$-medoids task in the setting where dissimilarity is an increasing function of a distance function. Given a set of $N$ elements, $\{x(i) : i \in \{1, \ldots, N\}\}$, with a distance defined between elements,
\begin{align*}
\dist(x(i), x(i')) &\ge  0, \\
\dist(x(i), x(i)) &= 0, \\
\dist(x(i), x(i')) &= \dist(x(i'), x(i)), \\
\dist(x(i), x(i'')) & \le \dist(x(i), x(i')) + \dist(x(i'), x(i'')),
\end{align*}
and given an energy function $\energy : \mathbb{R}^+ \rightarrow \mathbb{R}^+$ satisfying,
\begin{align*}
\energy(0) &= 0, \\
v_1 \le v_2 &\Longleftrightarrow \energy(v_1) \le \energy(v_2),
\end{align*}
The task is to find indices $\{ c(k) : k \in \{1, \ldots, K\} \} \subset \{1, \ldots, N\}$, to minimize,
\begin{equation*}
\sum_{i = 1}^{N} \min_{k \in \{1, \ldots, K\}} \energy(\dist(x(i), x(c(k))).
\end{equation*}

\section{The \pam{} algorithm}
\label{PAMpseudo}

\begin{algorithm}
\begin{algorithmic}[1]
\State $t \gets 0$
\State Initialize $\mathcal{C}_0 \subset \{1, \ldots, N\}$.
\While{\textbf{true}}
\For{$i_p \in \{1, \ldots, N\} \setminus \mathcal{C}_t$}
\For{$k_p \in \{1, \ldots, K\} $}
\State  $\psi_{t+1}^p(i_p, k_p) \gets \displaystyle \sum_{i=1}^N  \min_{i' \in \mathcal{C}_t \setminus \{c_t(k_p)\} \cup \{i_p\}} \energy(\dist(x(i), x(i')))$
\EndFor
\EndFor
\State $\displaystyle i_p^*, k_p^* \gets \argmin_{i_p, k_p} \psi_{t+1}^p(i_p, k_p)$ 
\If{$\psi_{t+1}^p(i_p^*, k_p^*) < 0$}
\State $\mathcal{C}_{t+1} \gets \mathcal{C}_t \setminus \{c_t(k_p^*)\} \cup \{i_p^*\}$
\Else{}
\State \textbf{break}
\EndIf
\State $t \gets t+1$
\EndWhile
\end{algorithmic}
\caption{The \pam{} algorithm of \cite{generic_kmedoids} is a computationally inefficient predecessor of \clarans{}. At lines 4 and 5, one loops over all possible (medoid, non-medoid) swaps, recording the energy obtained with each swap. At line 9, the best of all possible swaps is chosen. At line 10, if the best found swap results in a decrease in energy, proceed, otherwise stop. }
\label{alg::PAM}
\end{algorithm}

\section{\clarans{} In Detail, And How To Accelerate It}
\label{app::clarans}

We start by presenting modified notation, required to describe our optimisations of \clarans{}~\citep{clarans} in full pseudocode. As before, we will let the $N$ samples which we want to partition into $K$ clusters be $x(1), \ldots, x(N)$. Let $t \in \{1, \ldots, \infty \}$ denote the current round of the algorithm. Let $c_t(k) \in \{1, \ldots, N\}$  be the index of the sample chosen as the center of cluster $k \in \{1, \ldots, K\}$ at iteration $t$, so that $x(c_t(k))$ is the center of cluster $k$ at iteration $t$. Let $\mathcal{C}_t = \{c_t(k) \mid k \in \{1, \ldots, K\}\} \subset \{1, \ldots, N\}$ denote all such center indices. We let $a_t^1(i)$ be the cluster of sample $i$, that is
\begin{equation}
\label{defn_a}
a_t^{1}(i) = \argmin_{k \in \{1, \ldots, K\}} f(x(i),x(c_t(k))).
\end{equation}
Let $\Energy_t(k)$ denote the sum of the dissimilarities of elements in cluster $k$ at iteration $t$, also referred to as the \textit{energy} of cluster $k$, so that
\begin{align*}
\psi_t(k) &= \sum_{i : a_t^{1}(i) = k} f(x(i),x(c_t(k))). 
\end{align*}
Let $\psi_t = \sum_k \Energy_t(k)$ be the total energy, the quantity which we ultimately wish to minimize.  

We assume here that dissimilarity can be decomposed as in Eqn.~\eqref{fdecomp}, which will enable the use of the triangle inequality. 

Let $d_t^{1}(i)$ be the distance at iteration $t$ of sample $i$ to its nearest center, that is
\begin{align*}
d_t^{1}(i) &= \min_{i' \in \mathcal{C}_t} \dist(x(i),x(i')).
\end{align*}
Under assumption~\eqref{fdecomp}, we now have~\eqref{defn_a} taking the form, 
\begin{equation*}
a_t^{1}(i) = \argmin_{k \in \{1, \ldots, K\}} \dist(x(i),x(c_t(k))), 
\end{equation*}
so that $d_t^{1}(i) = \dist(x(i), x(c_t(a_t^{1}(i))))$. In the same way as we use $a_t^{1}(i)$ and $d_t^{1}(i)$ for the nearest center, we will use $a_t^{2}(i)$ and $d_t^{2}(i)$ for the second nearest center, that is
\begin{align*}
d_t^{2}(i) &= \min_{i' \in \mathcal{C}_t \setminus \{ c_t(a_t^{1}(i))\}} \dist(x(i),x(i')),  \\
a_t^{2}(i) &= \argmin_{k \in \{1, \ldots, K\} \setminus \{ a_t^{1}(i) \}} \dist(x(i),x(c_t(k))),
\end{align*}
so  that $d_t^{2}(i) = \dist(x(i), x(c_t(a_t^{2}(i))))$. The energy of a sample is now defined as the energy of the distance to its nearest center, so that at iteration $t$ the energy of sample $x(i)$ is $\energy(d_t^{1}(i))$. Finally, let the \textit{margin} of sample $i$ be defined as $m_t(i) = \energy(d^2_t(i)) - \energy(d^1_t(i))$.  Some cluster specific quantities which are required in the accelerated algorithm are,
\begin{align}
\begin{split}
N_t(k) &= | \{i :a_t^1(i) = k  \} |, \label{cqs} \\
D_t^1(k) &= \max_{i:a_t^1(i) = k} d_t^1(i),  \\
D_t^2(k) &= \max_{i:a_t^1(i) = k} d_t^2(i), \\
M^*_t(k) &= \frac{1}{N_t(k)}\sum_{i : a_t^1(i) = k} m_t(i).
\end{split}
\end{align}

The key triangle inequality results used to accelerate \clarans{} evaluations are now presented, with proofs in~\S\ref{app::accelerating}. Firstly, 
\begin{align*}
& \dist(x(i_p), x(c_t(k_p))) \ge D_t^1(k_p) + D_t^2(k_p) \implies \\
& \hspace{2em}  \mbox{ change in energy of cluster $k_p$ is  $N_t(k_p)M^*_t(k_p)$},
\end{align*} 
which says that if the new center $x(i_p)$ of cluster $k_p$ is sufficiently far from the old center $x(c_t(k_p))$, then all old elements of cluster $k_p$ will migrate to their old second nearest clusters, and so their change in energies will simply be their margins, which have already been computed. The second inequality used is,
\begin{align*}
& k  \not= k_p \;\land \dist(x(c_t(k)),  x(i_p))   \ge 2D_t^1(k) \implies \\
& \hspace{8em} \mbox{ no change in energy of cluster $k$},
\end{align*}
which states that if cluster $k$ is sufficiently far from the new center of $k_p$, there is no change in its energy as the indices of samples assigned to it do not change. 

These implications allow changes in energies of entire clusters to be determined in a single comparison. Clusters likely to benefit from these tests are those lying far from the new proposed center $x(i_p)$. The above tests involve the use of $\dist(x(c_t(k)),  x(i_p))$, but the computation of this quantity can sometimes be avoided by using the inequality, 
\begin{equation*}
\dist(x(c_t(k)), x(i_p)) \ge cc_t(a_t^1(i_p), k)  - D_t^1(i_p),
\end{equation*}
where $cc_t$ is the $K \times K$ matrix of inter-medoid distances at iteration $t$. To accelerate the update step of \clarans{}, the following bound test is used, 
\begin{align}
&  \min(\dist(x(c_t(k_p)), x(c_t(k))), \dist(x(i_p), x(c_t(k)))) \notag \\ 
& \hspace{3em} > D_t^1(k) + D_t^2(k) \implies \mbox{ no change in cluster $k$.} \notag
\end{align}
We also use a per-sample version of the above inequality for the case of failure to eliminate the entire cluster. Full proofs, descriptions, and algorithms incorporating these triangle inequalities can be found in~\ref{app::accelerating}.

\subsection{Review of notation and ideas}
Consider a proposed update for centers at iteration $t+1$, where the center of cluster $k_p$ is replaced by $x(i_p)$. Let $\delta_t(i \mid k_p \lhd i_p)$ denote the change in energy of sample $i$ under such an update, that is
\begin{align*}
\delta_t(i \mid k_p \lhd i_p) &= \mbox{ energy after swap } - \mbox{energy before swap } \\
&= \min_{i' \in \mathcal{C}_t  \setminus  \{c_t(k_p)\} \cup \{i_p\}  } \energy(\dist(x(i),x(i'))) - \energy(d_t^{1}(i)). 
\end{align*} 
We choose subscript `$p$'  for $k_p$ and $i_p$, as together they define a \emph{proposed} swap.  We will write $a^{12}d_t^{12}(i) = \{a_t^1(i), a_t^2(i), d_t^1(i), d_t^2(i)\}$ throughout for brevity. Finally, let
\begin{align*}
D_t^1(k) &= \max_{i:a_t^1(i) = k} d_t^1(i),  \\
D_t^2(k) &= \max_{i:a_t^1(i) = k} d_t^2(i).
\end{align*}

\begin{algorithm}[ht!]
\begin{algorithmic}[1]
\State Make proposal $k_p \in \{1, \ldots K\}$ and  $i_p \in \{1, \ldots, N\} \setminus \mathcal{C}_t$.
\State $\Delta_t(k_p \lhd i_p) \gets \frac{1}{N}\sum_{i = 1}^{N} \delta_t(i \mid k_p \lhd i_p )$ 
\Comment{The assignment \emph{evaluation} step, see Alg.~\ref{alg::claransdelta}}
\If{$\Delta_t < 0$}
  \State $\mathcal{C}_{t+1} \gets \mathcal{C}_{t} \setminus  \{c_t(k_p)\} \cup \{i_p\}$
  \For{$i\in \{1, \ldots, N\}$}
    \State Set $a^{12}d_{t+1}^{12}(i)$
    \Comment{The \emph{update} step, see Alg.~\ref{alg::claransupdate}}
  \EndFor
\Else
  \State $\mathcal{C}_{t+1} \gets \mathcal{C}_{t}$
  \For{$i\in \{1, \ldots, N\}$}
    \State $a^{12}d_{t+1}^{12}(i) \gets a^{12}d_t^{12}(i)$
  \EndFor
\EndIf
\end{algorithmic}
\caption{One round of \clarans{}. The potential bottlenecks are the proposal \emph{evaluation} at line 2 and the \emph{update} at line 6. The cost of proposal evaluation, if all distances are pre-computed, is $O(N)$, while if distances are not pre-computed it is $O(dN)$ where $d$ is the cost of a distance computation. As for the update step, there is no cost if $\Delta_t \ge 0$ as nothing changes, however if the proposal is accepted then $\mathcal{C}_{t+1} \not= \mathcal{C}_t$, and all data whose nearest or second nearest center change needs updating.}
\label{alg::claransstep}
\end{algorithm}

\begin{algorithm}[ht!]
\begin{algorithmic}[1]
\State $d_{} \gets \dist(x(i), x(i_p))$
\If{$a_t^1(i) = k_p$}
\If{$d_{} \ge d_t^2(i)$}
\State $\delta_t(i \mid k_p \lhd i_p ) \gets \energy(d_t^2(i)) - \energy(d_t^1(i))$
\Else 
\State $\delta_t(i \mid k_p \lhd i_p ) \gets \energy(d_{}) - \energy(d_t^1(i))$
\EndIf
\Else
\If{$d_{} \ge d_t^1(i)$}
\State $\delta_t(i \mid k_p \lhd i_p ) \gets 0$
\Else
\State $\delta_t(i \mid k_p \lhd i_p ) \gets \energy(d_{}) - \energy(d_t^1(i))$
\EndIf
\EndIf
\end{algorithmic}
\caption{Standard approach (level 0) with \clarans{} for computing $\delta_t(i \mid k_p \lhd i_p )$ at iteration $t$, as described in~\cite{clarans}. Note however that here we do not store all $N^2$ distances, as in~\cite{clarans}. }
\label{alg::claransdelta}
\end{algorithm}

\begin{algorithm}[ht!]
\begin{algorithmic}[1]
\Acomm{If the center which moves is nearest or second nearest, complete update required} 
\If{$a_t^1(i) = k_p$ or $a_t^2(i) = k_p$}
\State Get $\dist(x(i), x(c_{t+1}(k)))$ for all $k \in \{1, \ldots, K\}$
\State Use above $k$ distances to set $a^{12}d_{t+1}^{12}(i)$
\Else
\Acomm{$d_{t}^1(i)$ and $d_{t}^2(i)$ are still valid distances, so need only check new candidate center $k_p$}
\State $d_{} \gets \dist(x(i), x(i_p))$
\State Use the fact that $\{d_{t+1}^1(i), d_{t+1}^2(i)\} \subset \{d_{t}^1(i), d_{t}^2(i), d\}$ to set $a^{12}d_t^{12}(i)$
\EndIf
\end{algorithmic}
\caption{Simple approach (level 0) with \clarans{} for computing  $a^{12}d_{t+1}^{12}(i)$}
\label{alg::claransupdate}
\end{algorithm}

\subsection{Accelerating \clarans{}}
\label{app::accelerating}
We now discuss in detail how to accelerate the proposal evaluation and the cluster update. We split our proposed accelerations into 3 levels. At levels 1 and 2, triangle inequality bounding techniques are used to eliminate distance calculations. At level 3, an early breaking scheme is used to quickly reject unpromising swaps.

\subsubsection{Basic triangle inequalities bounds}
\label{sec:intro}
We show how $\delta_t(i \mid k_p \lhd i_p)$ can be bounded, with the final bounding illustrated in Figure~\ref{hoeffding1}. There are four bounds to consider : upper and lower bounds for each of the two cases $k_p = a_t^1(i)$ (the center being replaced is the center of element $i$) and $k_p \not= a_t^1(i)$ (the center being replaced is not the center of element $i$).  We will derive a lower bound for the two cases simultaneously, thus we will derive 3 bounds. First, consider the upper bound for the case $k_p \not= a_t^1(i)$,
\begin{align*}
\delta_t(i \mid k_p \lhd i_p) &= \min_{i' \in \mathcal{C}_t  \setminus  \{c_t(k_p)\} \cup \{i_p\}  } \energy(\dist(x(i),x(i'))) - \energy(d_t^{1}(i)), \\
&= \min_{i' \in \{c_t(a_t^1(i)), i_p\}} \energy(\dist(x(i),x(i'))) - \energy(d_t^{1}(i)), \\
&\le \energy(\dist(x(i),x(c_t(a_t^1(i))))) - \energy(d_t^{1}(i)), \\
& = 0.
\end{align*}
and thus we have
\begin{equation}
\label{eqn:drop}
k_p \not= a_t^1(i) \implies \delta_t(i \mid k_p \lhd i_p) \le 0.
\end{equation}
Implication~\ref{eqn:drop} simply states the obvious fact that the energy of element $i$ cannot increase when a center other than that of cluster $a_t^1(i)$ is replaced. The other upper bound case to consider is $k_p = a_t^1(i)$, which is similar,
\begin{align*}
\delta_t(i \mid k_p \lhd i_p) &= \min_{i' \in \mathcal{C}_t  \setminus  \{c_t(k_p)\} \cup \{i_p\}  } \energy(\dist(x(i),x(i'))) - \energy(d_t^{1}(i)), \\
&= \min_{i' \in \{c_t(a_t^2(i)), i_p\}} \energy(\dist(x(i),x(i'))) - \energy(d_t^{1}(i)), \\
&\le \energy(\dist(x(i),x(c_t(a_t^2(i))))) - \energy(d_t^{1}(i)), \\
&= \energy(d_t^{2}(i))  - \energy(d_t^{1}(i)), \\
& = m_t(i), \\
& \le M_t(k_p).
\end{align*}
and thus we have 
\begin{equation}
\label{eqn:fallbackbound}
k_p = a_t^1(i) \implies \delta_t(i \mid k_p \lhd i_p) \le M_t(i).
\end{equation}
Implication~\ref{eqn:fallbackbound} simply states the energy of element $i$ cannot increase by more than the maximum margin in the cluster of $i$ when it is the center of cluster $a_t^1(i)$ which is replaced. We now consider lower bounding $\delta_t(i \mid k_p \lhd i_p)$ for both the cases $a_t^1(i) = k_p$ and $a_t^1(i) \not= k_p$ simultaneously. We choose to bound them simultaneously as doing so separately arrives at the same bound.
\begin{align}
\delta_t(i \mid k_p \lhd i_p) &= \min_{i' \in \mathcal{C}_t  \setminus  \{c_t(k_p)\} \cup \{i_p\}  } \energy(\dist(x(i),x(i'))) - \energy(d_t^{1}(i)), \notag \\
 &\ge \min_{i' \in \mathcal{C}_t  \cup \{i_p\}  } \energy(\dist(x(i),x(i'))) - \energy(d_t^{1}(i)), \notag \\
&= \min_{i' \in \{c_t(a_t^1(i)), i_p\}} \energy(\dist(x(i),x(i'))) - \energy(d_t^1(i)), \notag \\
&=\min\left(0, \energy(\dist(x(i),x(i_p))) - \energy(d_t^1(i)) \right), \notag \\
&\ge\min\left(0, \energy(\dist(x(i),x(i_p))) - \energy(D_t^1(a_t^1(i))) \right).\label{bla1}
\end{align}
Let $d_p(k)$ denote the distance between the elements in the proposed swap,
\begin{equation*}
d_p(k)= \dist(x(c_t(k)),x(i_p)).
\end{equation*}
The triangle inequality guarantees that, 
\begin{equation}
\label{tri1}
\dist(x(i),x(i_p)) \ge \begin{cases}
0 &\text{if $d_p(a_t^1(i)) \le D_t^1(a_t^1(i)) $},\\
d_p(a_t^1(i)) - D_t^1(a_t^1(i)) &\text{if $D_t^1(a_t^1(i)) < d_p(a_t^1(i))  $}.
\end{cases}
\end{equation}
Using~\eqref{tri1} in~\eqref{bla1} we obtain,
\begin{equation}
\label{blaze}
\delta_t(i \mid k_p \lhd i_p) \ge \begin{cases}
-\energy(D_t^1(a_t^1(i))) &\text{if $d_p(a_t^1(i)) \le D_t^1(a_t^1(i)) $} \\
\energy(d_p(a_t^1(i)) - D_t^1(a_t^1(i))) - \energy(D_t^1(a_t^1(i))) &\text{if  $D_t^1(a_t^1(i)) < d_p(a_t^1(i)) \le 2D_t^1(a_t^1(i)) $} \\
0 &\text{if $2D_t^1(a_t^1(i)) < d_p(a_t^1(i)).$} 
\end{cases} 
\end{equation}
These are the lower bounds illustrated in Figure~\ref{hoeffding1}. Define $\Delta(k \mid k_p \lhd i_p)$ to be the average change in energy for cluster $k$ resulting from a proposed swap, that is,
\begin{equation*}
\Delta_t(k \mid k_p \lhd i_p) = \frac{1}{N_t(k)}\sum_{i : a_t^1(i) = k_p} \delta_t(i \mid k_p \lhd i_p).
\end{equation*}
Let the average of the change in energy over all data resulting from a proposed swap be $\Delta_t(k_p \lhd i_p)$, that is
\begin{equation*}
\Delta_t(k_p \lhd i_p) = \sum_k p_t(k) \Delta_t(k \mid k_p \lhd i_p).
\end{equation*}
One can show that for $k = k_p$,
\begin{equation}
\label{tail:kp}
d_p(k_p) \ge D_t^1(k_p) + D_t^2(k_p) \implies \Delta_t(k \mid k_p \lhd i_p) =  \,M^*_t(k_p).
\end{equation}
The equality~\eqref{tail:kp} corresponds to a case where the proposed center $x(i_p)$ is further from every point in cluster $k_p$ than is the second nearest center, in which case the increase in energy of cluster $k_p$ is simply the sum of margins. It corresponds to the solid red horizontal line in Figure~\ref{hoeffding1}, left.

\begin{figure}
\begin{center}
\includegraphics[width=\textwidth]{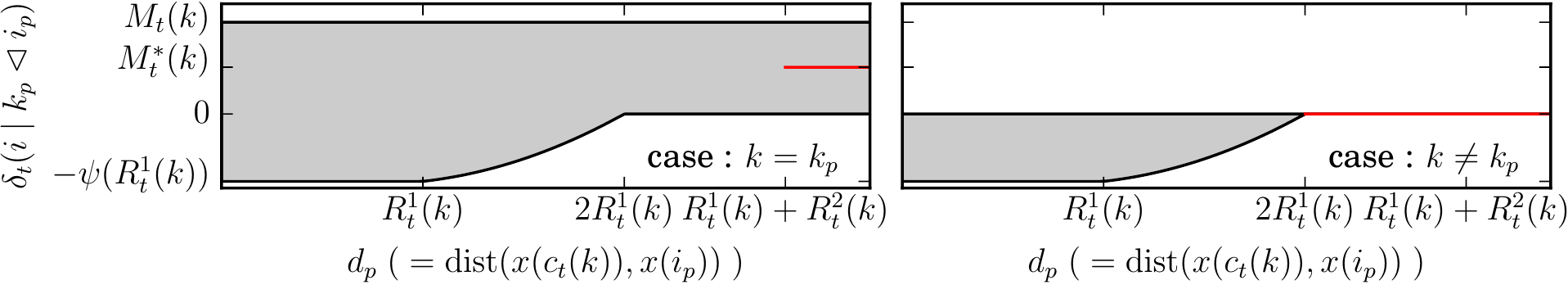}
\caption{Illustrating the bounds. Dark gray regions denote possible changes in energy of elements. On the left, the case $k = k_p$, where the solid line segment is the average change in element energy in the case where $d_p$ exceeds a certain radius. On the right, the case $k \not = k_p$, where sample energies can only decrease.  }
\label{hoeffding1}
\end{center}
\vskip -0.2in
\end{figure}

\subsubsection{Level 1 proposal evaluation accelerations}
\label{sec:level1}
What we wish to evaluate when considering a proposal is the mean change in energy, that is, 
\begin{equation}
\label{decomp1}
\frac{1}{N}\sum_{i = 1}^{N} \delta_t(i \mid k_p \lhd i_p ) =\frac{1}{N}\Bigg(  \underbrace{\sum_{k:k\not=k_p} \sum_{i : a_t^{1}(i) = k} \delta_t(i \mid k_p \lhd i_p )}_{\left(N - N_t(k_p)\right)\Delta_t^{-}(k_p \lhd \,i_p)} + \underbrace{\sum_{i : a_t^{1}(i) = k_p} \delta_t(i \mid k_p \lhd i_p )}_{N_t(k_p)\Delta(k_p \mid k_p \lhd \,i_p)} \Bigg).
\end{equation}
Where in~\eqref{decomp1} we define $\Delta_t^{-}(k_p \lhd \,i_p)$ as,
\begin{equation*}
\Delta_t^{-}(k_p \lhd \,i_p) = \frac{1}{N - N_t(k_p)}\sum_{k:k\not=k_p} \sum_{i : a_t^{1}(i) = k} \delta_t(i \mid k_p \lhd i_p ).
\end{equation*}
From~\ref{sec:intro} we have the result, corresponding to the solid line in Figure~\ref{hoeffding1}, that
\begin{equation}
\label{zoombaf}
a_t^1(i) = k \land k \not= k_p \land \dist(x(c_t(k)), x(i_p))  \ge 2D_t^1(k) \implies \delta_t(i \mid k_p \lhd i_p ) = 0.
\end{equation}
We use this result to eliminate entire clusters in the proposal evaluation step: a cluster $k$ whose center lies sufficiently far from $x(i_p)$ will not contribute, as long as $k \not=k_p$,  
\begin{equation*}
\Delta_t^{-}(k_p \lhd \,i_p) = \frac{1}{N - N_t(k_p)}\sum_{\substack{k:k\not=k_p \land \\ \dist(x(i_p), x(c_t(k)))  < 2D_t^1(k)}} \sum_{i : a_t^{1}(i) = k} \delta_t(i \mid k_p \lhd i_p ).
\end{equation*}
Implication~\ref{tail:kp}, corresponding to the solid line in Figure~\ref{hoeffding1}, left, can be used in the case $k = k_p$ to rapidly obtain the second term in~\eqref{decomp1} if $\dist(x(c_t(k_p)), x(i_p)) \ge D_t^1(k) + D_t^2(k)$. 

The level 1 techniques for obtaining whole cluster sums require the distances from $x(i_p)$ to all cluster centers, although in \S \ref{level2} (level 2) we show how even these distance calculations can sometimes be avoided. A second layer of element-wise triangle inequality tests is included for the case where the test on an entire cluster fails. 

These level 1 techniques for accelerating the proposal are presented in Alg.~\ref{alg:TRICLARANS1:eval}.

\begin{algorithm}
\begin{algorithmic}[1]
\Acomm{Set distances from proposed center $x(i_p)$ to all current centers $\mathcal{C}_t$}
\Comment{$O(dK)$}
\For{$k \in \{1, \ldots, K\}$}
\State $d_c(k) \gets \dist(x(i_p), x(c_t(k)))$
\EndFor 
\State $d_{pp} \gets d_c(k_p)$
\Acomm{Process cluster $k_p$}
\Comment{$O(dN/K^2)$}
\State \texttt{CLARANS-12-EVAL-P}$()$
\Acomm{Process all other clusters } 
\Comment{$O(dN/K)$}
\State \texttt{CLARANS-1-EVAL-N-P}$()$
\end{algorithmic}
\caption{\texttt{CLARANS-1-EVAL} : proposal evaluation using level 1 accelerations. We call subroutines for processing the cluster $k_p$ (\texttt{CLARANS-12-EVAL-P}) and all other clusters (\texttt{CLARANS-1-EVAL-N-P}). The expected complexity for the full evaluation is $O(d(K + N/K))$. The expected complexity for \texttt{CLARANS-12-EVAL-P} assumes that the probability that cluster $k_p$ is not processed using~\eqref{tail:kp} is $O(1/K)$.}
\label{alg:TRICLARANS1:eval}
\end{algorithm}

\begin{algorithm}
\begin{algorithmic}[1]
\Acomm{Try to use \eqref{tail:kp} to quickly process cluster $k_p$}
\If{$d_{pp}  \ge D_t^1(k_p) +  D_t^2(k_p)$}
\State $\Delta_t(k_p \lhd i_p) = \Delta_t(k_p \lhd i_p) + p_t(k_p) M_t^*(k_p)$ 
\Else
\Acomm{Test~\eqref{tail:kp} failed, enter element-wise loop for cluster $k_p$}
\For{$i \in \{i' : a_t^1(t') = k_p$\}}
\Acomm{Try tighter element-wise version of \eqref{tail:kp} to prevent computing a distance}
\If{$d_{pp} \ge d_t^1(i) +  d_t^2(i)$}
\State $\Delta_t(k_p \lhd i_p) = \Delta_t(k_p \lhd i_p) + m_t(i)/N$ 
\Else{}
\Acomm{Test failed, need to compute distance}
\State $d \gets \dist(x(i_p), x(i))$
\State $\Delta_t(k_p \lhd i_p) = \Delta_t(k_p \lhd i_p) + \min(d, m_t(i))/N$ 
\EndIf
\EndFor
\EndIf
\end{algorithmic}
\caption{\texttt{CLARANS-12-EVAL-P} : adding the contribution of cluster $k_p$ to $\Delta_t(k_p \lhd i_p)$. The key inequality here is~\eqref{tail:kp}, which states that if $i_p$ is sufficiently far from the center of cluster $k_p$, then elements in cluster $k_p$ will go to their current second nearest center if the center of $k_p$ is removed.}
\label{alg:TRICLARANS1:eval_pr}
\end{algorithm}

\begin{algorithm}
\begin{algorithmic}[1]
\For{$k \in \{1, \ldots, K\}  \setminus \{k_p\}$}
\Acomm{Try to use \eqref{zoombaf} to quickly process cluster $k$}
\If{$d_c(k) < 2D_t^1(k)$}
\Acomm{Test~\eqref{zoombaf} failed, enter element-wise loop for cluster $k$}
\For{$i \in \{i' : a_t^1(t') = k$\}}
\Acomm{Try tighter element-wise version of \eqref{zoombaf} to prevent computing a distance}
\If{$d_c(k) < 2d_t^1(i)$}
\Acomm{Test failed, need to compute distance}
\State $d \gets \dist(x(i_p), x(i))$
\If{$d < d_t^1(i)$}
\State $\Delta_t(k_p \lhd i_p) = \Delta_t(k_p \lhd i_p) + (d - d_t^1(i))/N$
\EndIf
\EndIf
\EndFor
\EndIf
\EndFor
\end{algorithmic}
\caption{\texttt{CLARANS-1-EVAL-N-P} : adding contributions of all clusters $k \not= k_p$ to  $\Delta_t(k_p \lhd i_p)$. The key inequality used is~\eqref{zoombaf}, which states that if the distance between $x(i_p)$ and the center of cluster $k$ is large relative to the distance from the center of cluster $k$ to its most distant member, then there is no change in energy in cluster $k$. }
\label{alg:TRICLARANS1:eval_npr}
\end{algorithm}

\subsubsection{Level 1 cluster update accelerations}
If a proposal is accepted, the standard \texttt{CLARANS} uses Alg.~\eqref{alg::claransupdate} to obtain  $a^{12}d_{t+1}^{12}(i)$, where every element $i$ requires at least $1$ distance calculation, with those elements for which cluster $k_p$ is the nearest or second nearest at $t$ require $K$ distance calculations. Here at level 1, we show how many samples requiring $1$ distance calculation can be set without any distance calculations, and even better: how entire clusters can sometimes be processed in constant time.

The inequality to eliminate an entire cluster is,
\begin{align}
\begin{split}
\min(\dist(x(c_t(k_p)), x(c_t(k))), \dist(x(i_p), x(c_t(k)))) &> D_t^1(k) + D_t^2(k) \label{grossoumodo} \\
&\implies \mbox{ no change in cluster $k$.} 
\end{split}
\end{align}

While the inequality used to eliminate the distance calculation for a single sample is,
\begin{align}
\begin{split}
\min(\dist(x(c_t(k_p)), x(c_t(k))), \dist(x(i_p), x(c_t(k)))) &> d_t^1(i) + d_t^2(i) \label{finerat} \\
&\implies \mbox{ no change for sample $i$.}
\end{split}
\end{align}
Note that the inequalities need to be strict, `$\ge$' would not work. The test~\eqref{grossoumodo} is illustrated in Figure~\ref{fig:nochange}, left. These bound tests are used in Alg.~\eqref{alg:TRICLARANS1:update}. The time required to update cluster related quantities ($D_1, D_t, M^*$) is negligible as compared to updating sample assignments, and we do not do anything clever to accelerate it, other than to note that only clusters which fail to be eliminated by~\eqref{grossoumodo} potentially require updating.

\begin{figure}
\begin{center}
\includegraphics[width=0.4\textwidth]{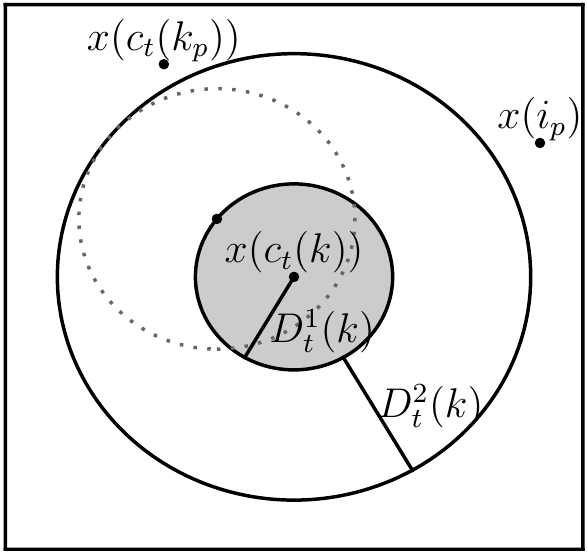}
\caption{(left) illustrating test~\ref{grossoumodo}. Consider an element $x(i)$ with $a_t^1(i) = k$, so that $x(i)$ lies in the inner gray circle, and $k \not= k_p$. Firstly, $\dist(x(c_t(k_p)), x(c_t(k))) > D_t^1(k) + D_t^2(k)$ means that $a_t^2(i) \not= k_p$, thus both $d_t^1(i)$ and $d_t^2(i)$ will be valid distances at iteration $t+1$. Then, as $\dist(x(i_p), x(c_t(k))) > D_t^1(k) + D_t^2(k)$, we have $\dist(x(i_p), x(i)) > D_t^2(k) \ge d_t^2(i)$, so $a_{t+1}^2(i) = a_{t}^2(i)$. }
\label{fig:nochange}
\end{center}
\vskip -0.2in
\end{figure}

\begin{algorithm}
\begin{algorithmic}[1]
\Acomm{Set distance from centers to the nearer of new and old cluster center $k_p$}
\Comment{$O(dK)$}
\For{$k \in \{1, \ldots, K\}$}
\State $d_c(k) \gets \min(\dist(x(c_t(k_p)), x(c_t(k))), \dist(x(i_p), x(c_t(k))))$
\State $\;\;\;\;\;\;\;\;\left(\, = \min(\dist(x(c_t(k_p)), x(c_t(k))), \dist(x(c_{t+1}(k_p)), x(c_t(k)))) \right)$
\EndFor 
\Acomm{Process elements in cluster $k_p$ from scratch}
\For{$i \in \{i' : a_t^1(t') = k_p$\}}
\State Obtain $a^{12}d_{t+1}^{12}(i)$ from scratch
\EndFor{}
\Acomm{Process all other clusters}
\For{$k \in \{1, \ldots, K\}  \setminus \{k_p\}$}
\Acomm{Try to use \eqref{grossoumodo} to quickly process cluster $k$}
\If{$d_c(k) \le D_t^1(k) + D_t^2(k)$}
\For{$i \in \{i' : a_t^1(t') = k$\}}
\Acomm{Try to use \eqref{finerat} to quickly process element $i$}
\If{$d_c(k) \le d_t^1(k) + d_t^2(k)$}
\If{$a_t^2(i) = k_p$}
\State Obtain $a^{12}d_{t+1}^{12}(i)$ from scratch
\Else{}
\State $d \gets \dist(x(i),x(i_p))$
\State Use $\{d_{t+1}^1(i), d_{t+1}^2(i)\} \subset \{d_{t}^1(i), d_{t}^2(i), d\}$ as in~\eqref{alg::claransupdate}
\EndIf
\Else{}
\State $a^{12}d_{t+1}^{12}(i) \gets a^{12}d_t^{12}(i)$
\EndIf
\EndFor
\EndIf
\EndFor
\State Update cluster statistics for $t+1$ where necessary
\end{algorithmic}
\caption{\texttt{CLARANS-1-UPDATE} : cluster update using level 1 accelerations. Inequalities~\eqref{grossoumodo} and~\eqref{finerat} are used to accelerate the updating of $a^{12}d_{t+1}^{12}(i)$ for $i : a^1(i) \not= k_p \land a^2(i) \not= k_p$. Essentially these inequalities say that if neither the old center of cluster $k_p$ nor its new center $x(i_p)$ are near to an element (or all elements in a cluster), then the nearest and second element of that element (or all elements on a cluster) will not change.}
\label{alg:TRICLARANS1:update}
\end{algorithm}

\subsubsection{Level 2 proposal evaluation accelerations}
\label{level2}

We now discuss level 2 accelerations. Note that these accelerations come at the cost of an increase of $O(K^2)$ to the memory footprint. The key idea is to maintain all $K^2$ inter-center distances, denoting by $cc_t(k, k') = \dist(x(c_t(k)), x(c_t(k')))$ the distance between centers of clusters $k$ and $k'$. At level 1, all distances $\dist(x(i_p), x(c_t(k)))$ for $k \in \{1, \ldots, K\}$ are computed up-front for proposal evaluation, but here at level 2 we use,
\begin{equation}
\label{quickjob}
\dist(x(i_p), x(c_t(k))) \ge cc_t(a_t^1(i_p), k)  - d_t^1(i_p),
\end{equation}
to eliminate the need for certain of these distances. Combining~\eqref{quickjob} with~\eqref{zoombaf} gives, 
\begin{equation}
\label{lazyzoombaf}
a_t^1(i) = k \land k \not= k_p \land cc_t(a_t^1(i_p), k)  - d_t^1(i_p)  \ge 2D_t^1(k) \implies \delta_t(i \mid k_p \lhd i_p ) = 0.
\end{equation}

\begin{algorithm}
\begin{algorithmic}[1]
\For{$k \in \{1, \ldots, K\}  \setminus \{k_p\}$} 
\Acomm{Try to use \eqref{lazyzoombaf} to quickly process cluster $k$}
\If{$cc_t(a_t^1(i_p), k)  - 2D_t^1(k)  < d_t^1(i_p) $}
\Acomm{Test~\eqref{lazyzoombaf} failed, computing $d_c(k)$ and resorting to level 1 accelerations...}
\State $d_{pk} \gets \dist(x(i_p), x(c_t(k)))$
\If{$d_{pk} < 2D_t^1(k)$}
\Acomm{Test~\eqref{tail:kp} failed, enter element-wise loop for cluster $k$}
\For{$i \in \{i' : a_t^1(t') = k$\}}
\Acomm{Try tighter element-wise version of \eqref{zoombaf} to prevent computing a distance}
\If{$d_{pk} < 2d_t^1(i)$}
\Acomm{Test failed, need to compute distance}
\State $d \gets \dist(x(i_p), x(i))$
\If{$d < d_t^1(i)$}
\State $\Delta_t(k_p \lhd i_p) \gets \Delta_t(k_p \lhd i_p) + (d - d_t^1(i))/N$
\EndIf
\EndIf
\EndFor
\EndIf
\EndIf
\EndFor
\end{algorithmic}
\caption{\texttt{CLARANS-2-EVAL-N-P} : add contribution of all clusters $k \not= k_p$ to  $\Delta_t(k_p \lhd i_p)$. In addition to the bound tests used at level 1, inequality \eqref{lazyzoombaf} is used to test if a center-center distance needs to be calculated.}
\label{alg:TRICLARANS2:eval_npr}
\end{algorithm}

\begin{algorithm}
\begin{algorithmic}[1]
\State $d_{pp} \gets \dist(x(i_p), x(c_t(k_p)))$
\Acomm{Process cluster $k_p$}
\Comment{$O(dN/K^2)$}
\State \texttt{CLARANS-12-EVAL-P}$()$
\Acomm{Process all other clusters } 
\Comment{$O(dN/K)$}
\State \texttt{CLARANS-2-EVAL-N-P}$()$
\end{algorithmic}
\caption{\texttt{CLARANS-2-EVAL} : Proposal evaluation using level 2 accelerations. Unlike at level 1, not all distances from $x(i_p)$ to centers need to be computed up front.} 
\label{alg:TRICLARANS2:eval}
\end{algorithm}

\subsubsection{Level 2 cluster update accelerations}
The only acceleration added at level 2 for the cluster update is for the case $k_p \in \{a_t^1(i), a_t^2(i) \}$, where at level 1, $a^{12}d_{t+1}^{12}$ is set from scratch, requiring all $K$ distances to centers to be computed. At level 2, we use $cc_t$ to eliminate certain of these distances using \texttt{WARMSTART}, which takes in the distances to 2 of the $K$ centers and uses the larger of these as a threshold beyond which any distance to a center can be ignored.


\begin{algorithm}
\begin{algorithmic}[1]
\State For $k \in \{1, \ldots, K\} \setminus \{k_p\}$: compute $\dist(x(i_p), x(c_t(k))) \,\left( \, = \dist(x(c_{t+1}(k_p)), x(c_{t+1}(k))  \right)$  and set $cc_{t+1}$ accordingly (in practice we don't need to store $cc_{t}$ and $cc_{t+1}$ simultaneously as they are very similar).
\For{$k \in \{1, \ldots, K\}$} 
\State $d_c(k) \gets \min(cc_{t}(k_p, k), cc_{t+1}(k_p, k))$
\EndFor 
\Acomm{Process elements in cluster $k_p$ from scratch}
\For{$i \in \{i' : a_t^1(t') = k_p$\}}
\State $d \gets \dist(x(i),c_{t+1}(k_p))$
\State Obtain $a^{12}d_{t+1}^{12}(i)$, using \texttt{WARMSTART} with $d$ and $d_t^2(i)$.
\EndFor{}
\Acomm{Process all other clusters}
\For{$k \in \{1, \ldots, K\}  \setminus \{k_p\}$}
\Acomm{Try to use \eqref{grossoumodo} to quickly process cluster $k$}
\If{$d_c(k) \le D_t^1(k) + D_t^2(k)$}
\For{$i \in \{i' : a_t^1(t') = k$\}}
\Acomm{Try to use \eqref{finerat} to quickly process element $i$}
\If{$d_c(k) \le d_t^1(k) + d_t^2(k)$}
\State $d \gets \dist(x(i),c_{t+1}(k_p))$
\If{$a_t^2(i) = k_p$}
\State Obtain $a^{12}d_{t+1}^{12}(i)$, using \texttt{WARMSTART} with $d$ and $d_t^1(i)$.
\Else{}
\State Use $\{d_{t+1}^1(i), d_{t+1}^2(i)\} \subset \{d_{t}^1(i), d_{t}^2(i), d\}$ as in~\eqref{alg::claransupdate}
\EndIf
\Else{}
\State $a^{12}d_{t+1}^{12}(i) \gets a^{12}d_t^{12}(i)$
\EndIf
\EndFor
\EndIf
\EndFor
\State Update cluster statistics for $t+1$ where necessary
\end{algorithmic}
\caption{\texttt{CLARANS-2-UPDATE} : update using level 2 accelerations. The only addition to level 1 accelerations is the use of \texttt{WARMSTART} to avoid computing all $k$ sample-center distances for elements whose nearest or second nearest is $k_p$.}
\label{alg:TRICLARANS2:update}
\end{algorithm}


\subsection{Level 3}
\label{sec:level3}
At levels 1 and 2, we showed how \clarans{} can be accelerated using the triangle inequality. The accelerations were exact, in the sense that for a given initialization, the clustering obtained using \clarans{} is unchanged whether or not one uses the triangle inequality. 

Here at level 3 we diverge from exact acceleration. In particular, we will occasionally reject good proposals. However, the proposals which are accepted are still only going to be good ones, so that the energy strictly decreases. In this sense, it is not like stochastic gradient descent, where the loss is allowed to increase.

The idea is to the following. Given a proposal swap : replace the center of cluster $k_p$ with the element indexed by $i_p$, use a small sample of data to estimate the quality of the swap, and if the estimate is bad (increase in energy) then immediately abandon the proposal and generate a new proposal. If the estimate is good, obtain a more accurate estimate using more ($2\times$) elements. Repeat this until all the elements have been used and the exact energy under the proposed swap is known : if the exact energy is lower, implement the swap otherwise reject it. 

The level 1 and 2 accelerations can be used in parallel with the acceleration here. The elements sub sampled at level 3 are chosen to belong to clusters which are not eliminated using level 1 and 2 cluster-wise bound tests.  Suppose that there are $\tilde{K}$ clusters which are not eliminated at level 2, we choose the number of elements chosen in the smallest sub sample to be $30\tilde{K}$. Thereafter the number of elements used to estimate the post-swap energy doubles. 

Let the number of elements in the $\tilde{K}$ non-eliminated clusters by $n_{A}$ and the number sampled be $n_{S}$, so that $n_{S} = 30\tilde{K}$. Supposing that $n_A/n_{S}$ is a power of 2. Then, one can show that the probability that a good swap is rejected is bounded above by $1 - n_{S}/n_{A}$. Consider the case $n_A/n_{S} = 2$, so that the sample is exactly half of the total. Suppose that the swap is good. Then, if the sum over the sample is positive, the sum over its complement must be negative, as the total sum is negative. Thus there at least as many ways to draw $n_S$ samples whose sum is negative as positive. 

If $n_A/n_{S} = 4$, then consider what happens if one randomly assign another quarter to the sample. With probability one half the sum is negative, thus by the same reasoning with probability at least $1/2 \times 1/2 = 1/4$ the sum over the original $n_S$ samples is negative.

\begin{algorithm}
\begin{algorithmic}[1]
\State Determine which clusters are not eliminated at level 2, define to be $U$. 
\State $\tilde{K} \gets |U|$.
\State $N_T \gets \sum_{k \in U} N_t(k)$
\State $N_S \gets 30\tilde{K}$
\State $S \gets $ uniform sample of indices of size $N_S$ from clusters $U$
\State $\hat{\Delta}_t(k_p \lhd i_p) \gets \infty$
\While{$N_S < N_T$ and $\hat{\Delta}_t(k_p \lhd i_p) < 0$}
\State $\hat{\Delta}_t(k_p \lhd i_p) \gets \frac{1}{N_S} \sum_{i \in S}  \delta_t(i \mid k_p \lhd i_p)$
\State $N_S \gets \min(N_T, 2N_S)$
\State $S \gets S \, \cup$ uniform sample of indices so that $|S| = N_S$.
\EndWhile
\If{$N_S < N_T$}
\Return \texttt{reject}
\Else{}
\State Compute $\Delta_t(k_p \lhd i_p)$
\If{$\Delta_t(k_p \lhd i_p) < 0$}
\Return \texttt{accept}
\Else{}
\Return \texttt{reject}
\EndIf
\EndIf
\end{algorithmic}
\caption{Level 3: Schemata of using sub sampling to quickly eliminate unpromising proposals without computing an exact energy. This allows for more rapid proposal evaluation.}
\label{alg:subsampling}
\end{algorithm} 

\section{Links to datasets}
\label{urls}
The rcv1 dataset : \url{http://jmlr.csail.mit.edu/papers/volume5/lewis04a/a13-vector-files/lyrl2004_vectors_train.dat.gz}

Chromosone 10 : \url{http://ftp.ensembl.org/pub/release-77/fasta/homo_sapiens/dna/Homo_sapiens.GRCh38.dna.chromosome.10.fa}

English word list : \url{https://github.com/dwyl/english-words.git}

\section{Local minima formalism}
\begin{theorem}
A local minimum of \clarans{} is always a local minimum of \vik{}. However, there exist local minima of \vik{} which are not local minima of \clarans{}. 
\end{theorem}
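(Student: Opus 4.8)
The plan is to prove the two assertions separately. Throughout, for a set of $K$ distinct center indices $\mathcal{C}=\{c(1),\dots,c(K)\}$ write $a(i)=\argmin_{k}\dist(x(i),x(c(k)))$ for the induced assignment, $\mathcal{A}_k=\{i:a(i)=k\}$ for the Voronoi cells, $d^1(i)=\min_{i'\in\mathcal{C}}\dist(x(i),x(i'))$, and $\energy(\mathcal{C})=\sum_{i}\energy(\dist(x(i),x(c(a(i)))))$ for the total energy that Algorithm~\ref{alg::CLARANS101} minimises. I will say $\mathcal{C}$ is a local minimum of \vik{} (Algorithm~\ref{alg::medlloyd}) if one round of \vik{} can leave it unchanged, i.e. for each $k$ the center $x(c(k))$ attains $\min_{i\in\mathcal{A}_k}\sum_{i'\in\mathcal{A}_k}\energy(\dist(x(i),x(i')))$; and a local minimum of \clarans{} if no single-swap proposal strictly decreases $\energy(\cdot)$.

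For the first claim I would argue by contraposition: assume $\mathcal{C}$ is not a \vik{} fixed point. Then there is a cell $k$ and an index $j\in\mathcal{A}_k$ with $\sum_{i'\in\mathcal{A}_k}\energy(\dist(x(j),x(i'))) < \sum_{i'\in\mathcal{A}_k}\energy(\dist(x(c(k)),x(i')))$. A short argument shows $j\notin\mathcal{C}$ --- if $j$ were a center then $a(j)=k$ forces $\dist(x(j),x(c(k)))=0$, and the triangle inequality then makes the two sums equal --- so $(i_-,i_+)=(c(k),j)$ is a legal \clarans{} proposal. Bounding the post-swap energy of $\mathcal{C}'=\mathcal{C}\setminus\{c(k)\}\cup\{j\}$ by routing each $i\in\mathcal{A}_k$ to $x(j)$ and leaving each $i\notin\mathcal{A}_k$ at $x(c(a(i)))$ gives
\[
\energy(\mathcal{C}') \;\le\; \sum_{i\in\mathcal{A}_k}\energy(\dist(x(i),x(j))) \;+\; \sum_{i\notin\mathcal{A}_k}\energy(d^1(i)),
\]
and since $\energy(\mathcal{C})=\sum_{i\in\mathcal{A}_k}\energy(\dist(x(i),x(c(k))))+\sum_{i\notin\mathcal{A}_k}\energy(d^1(i))$, subtraction yields $\energy(\mathcal{C}')-\energy(\mathcal{C})\le\sum_{i\in\mathcal{A}_k}[\energy(\dist(x(i),x(j)))-\energy(\dist(x(i),x(c(k))))]<0$. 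Hence the swap would be accepted and $\mathcal{C}$ is not a local minimum of \clarans{}; this is exactly the ``look one step further ahead'' effect illustrated in Figure~\ref{swappower}.

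For the second claim I would exhibit the $N=7$, $K=2$ instance of Figure~\ref{swappower} with centers $\{x(1),x(4)\}$ and check by direct computation that (a) its Voronoi cells are $\{x(1)\}$ and $\{x(2),\dots,x(7)\}$ and $x(4)$ is the unique squared-distance medoid of the second cell, so $\{x(1),x(4)\}$ is a local minimum of \vik{}; and (b) the proposal $(i_-,i_+)=(4,7)$ produces cells $\{x(1),x(2),x(3),x(4)\}$ and $\{x(5),x(6),x(7)\}$ of strictly smaller total energy, so $\{x(1),x(4)\}$ is not a local minimum of \clarans{}. If a self-contained instance is wanted, one can take a lone point at the origin together with the triples $\{(3,\epsilon),(3,-\epsilon),(3+\eta,0)\}$ and $\{(10,\epsilon),(10,-\epsilon),(10+\eta,0)\}$ for small $\epsilon,\eta>0$.

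I do not expect a genuine obstacle; the work is in the bookkeeping. In the first claim the only subtlety is verifying that a strict within-cell improver cannot be an existing center (the $\energy(0)=0$ step above) and pinning down the meaning of ``local minimum of \vik{}'' in the presence of ties in the assignment or the medoid update --- the argument is robust to the natural choice. In the second claim the point to watch is that the witness must be a \emph{strict} \vik{} fixed point, which is why a cell of size two (always tied for its medoid) does not suffice and the example needs a cell of size at least three; with that arranged, (a) and (b) are finite computations.
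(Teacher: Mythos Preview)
Your proposal is correct and follows essentially the same approach as the paper's proof: for the first claim you argue by contraposition that a non-medoid center can be swapped with its cell's medoid to strictly decrease energy (the paper states this in one line), and for the second claim you invoke the same $N=7$, $K=2$ example of Figure~\ref{swappower}. Your treatment is more careful than the paper's in one respect: you explicitly verify that the strictly-better within-cell medoid $j$ cannot already belong to $\mathcal{C}$ (so that $(c(k),j)$ is a legal \clarans{} swap), a point the paper's proof glosses over.
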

\begin{proof}
The second statement is proven by the existence of example in the Introduction. For the first statement, suppose that a configuration is a local minimum of \clarans{}, so that none of the $K(N-K)$ possible swaps results in a decrease in energy. Then, each center must be the medoid of its cluster, as otherwise we could swap the center with the medoid and obtain an energy reduction. Therefore the configuration is a minimum of \clarans{}.
\end{proof}

\section{Efficient Levenshtein distance calculation}
The algorithm we have developed relies heavily on the triangle inequality to eliminate distances. However, it is also possible to abort distance calculations once started if they exceed a certain threshold of interest. When we wish to determine the 2 nearest centers to a sample for example, we can abort a distance calculation as soon as we know the distance being calculated is greater than at least two other centers. 

For vectorial data, this generally does not result in significant gains. However, when computing the Levenshtein distance it can help enormously. Indeed, for a sequence of length $l$, without a threshold on the distance the computation cost of the distance is $O(l^2)$. With a threshold $m$ it becomes $(lm)$. Essentially, only the diagonal of is searched while running the dynamic Needleman-Wunsch algorithm. We use this idea at all levels of acceleration.

\section{A Comment on Similarities used in Bioinformatics}
A very popular similarity measure in bioinformatics is that of Smith-Waterman. The idea is that similarity should be computed based on the most similar regions of sequences, and not on the entire sequences. Consider for example, the sequences $a = 123123 898989, b = 454545 898989, c = 123123 012012$. According to Smith-Waterman, these should have $sim ( a  , b ) = sim ( a  , c )  \gg sim ( b  , c )$. This is not possible to turn into a proper distance, as one would need $\dist(a,b) = \dist (a,c) \ll \dist(b,c)$, which is going to break the triangle inequality. Thus, the triangle inequality accelerations introduced cannot be applied to similarities of the Smith-Waterman type.

\section{Pre-initialising with \kmeanspp{}}
\label{sec::precede2}
In Figure~\ref{fig::precede}, we compare \kmeanspp{}+\clarans{}+\lloyd{} and \kmeanspp{}+\lloyd{}. 
\begin{figure}
\begin{minipage}[c]{0.5\textwidth}
\includegraphics[width=1.0\columnwidth]{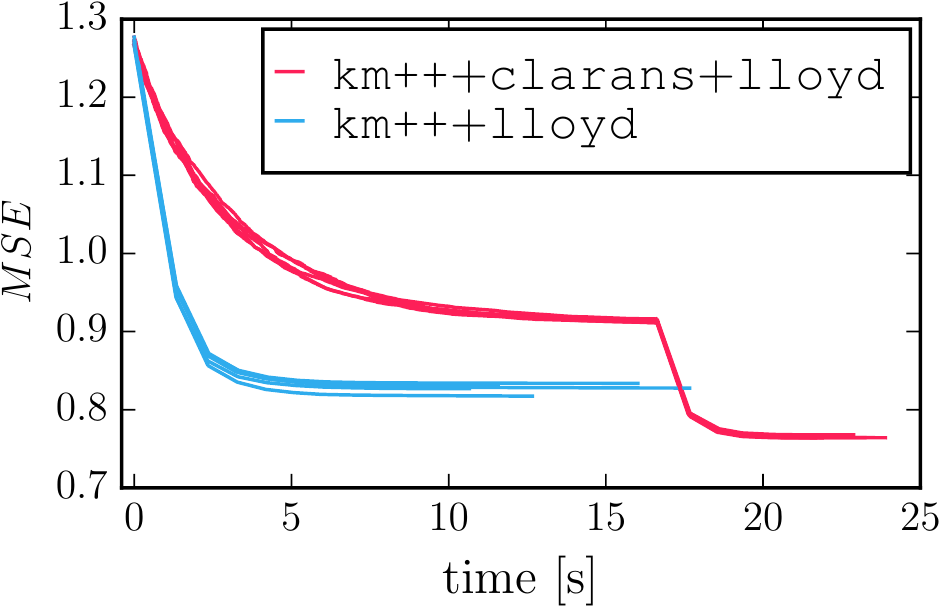}
\end{minipage}\hfill
\begin{minipage}[c]{0.48\textwidth}
\caption{Comparing \kmeanspp{}+\clarans{}+\lloyd{} and \kmeanspp{}+\lloyd{}, over ten runs, on the complete rna dataset at \protect\url{https://www.csie.ntu.edu.tw/~cjlin/libsvmtools/datasets/binary.html\#cod-rna} with dimensions $N$ = $488,565$, $d$ = $8$, and $K$ = $2,000$. We ignore here the time to run \kmeanspp{}, so that at $t$ = $0$ \kmeanspp{} has finished. Running \clarans{} before \lloyd{} results in mean final MSE of $0.76$, a significant improvement over $0.83$ obtained without \clarans{}. With \clarans{}+\lloyd{}, that is without pre-initialising with \kmeanspp{}, the mean MSE is also $0.76$, although \clarans{} runs for 28 seconds, as opposed to 18 seconds with \kmeanspp{}+\clarans{}+\lloyd{}.}
\vspace{2em}
\label{fig::precede}
\end{minipage}
\end{figure}

\section{Comapring the different optimisations levels, and kmlocal}
\label{sec:compo}
We briefly present results of the optimisations at each of the levels, as well as compare to the \clarans{} implementation accompanying \cite{Kanungo_2002_local}, an algorithm which they call `Swap'. The source code of \cite{Kanungo_2002_local} can be found at \url{https://www.cs.umd.edu/~mount/Projects/KMeans/} and is called `kmlocal', and our code is currently at \url{https://github.com/anonymous1331/km4kminit}. To the best of our knowledge, we compiled kmlocal correctly, and used the default -O3 flag in the Makefile. The only modification we made to it was to output the elapsed time after each iteration, which has negligible effect on performance.  

The data consists in this experiment is $N = 500,000$ data points in $d = 4$, drawn i.i.d from a Gaussian with identity covariance, and $K = 500$. With all optimisations (level 3) convergence is obtained within 20 seconds. We notice that each optimisation provides a significant boost to convergence speed. The faster initialisation at levels 2 and 3 is due to the fact that using inter-center distances allows nearests and second nearests to be determined with fewer distances and comparisons. 

Finally we note that the implementation of \cite{Kanungo_2002_local}, kmlocal, is about $100\times$ slower than our level 3 implememtation on this data. We have not run any other experiments comparing performance.

\begin{figure}
\begin{center}
\includegraphics[width=1.0\columnwidth]{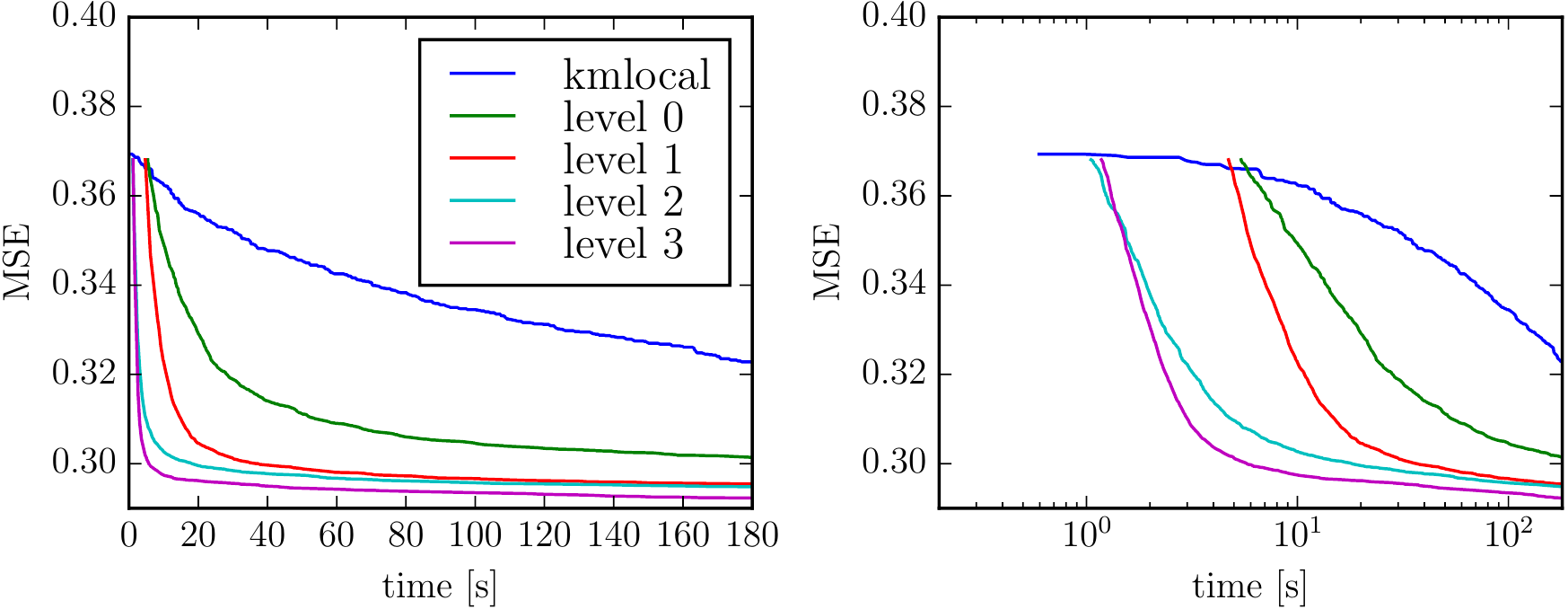}
\caption{Comparing the different optimisation levels and the implementation of \clarans{} of \cite{Kanungo_2002_local}, kmlocal at \protect\url{https://www.cs.umd.edu/~mount/Projects/KMeans/}. Left and right are the same but for a logarithmic scale for the time-axis on the right. The data being clustered here is $N=500,000$ elements in $d=4$, drawn from a Gaussian distribution with identity covariance, and $K = 500$. We see that the various levels of optimisation provide significant accelerations, and that the implementation in kmlocal is 2 orders of magnitude slower than our level 3 optimised implementation. }
\label{comparisona}
\end{center}
\end{figure}

\begin{figure}
\begin{center}
\includegraphics[width=1.0\columnwidth]{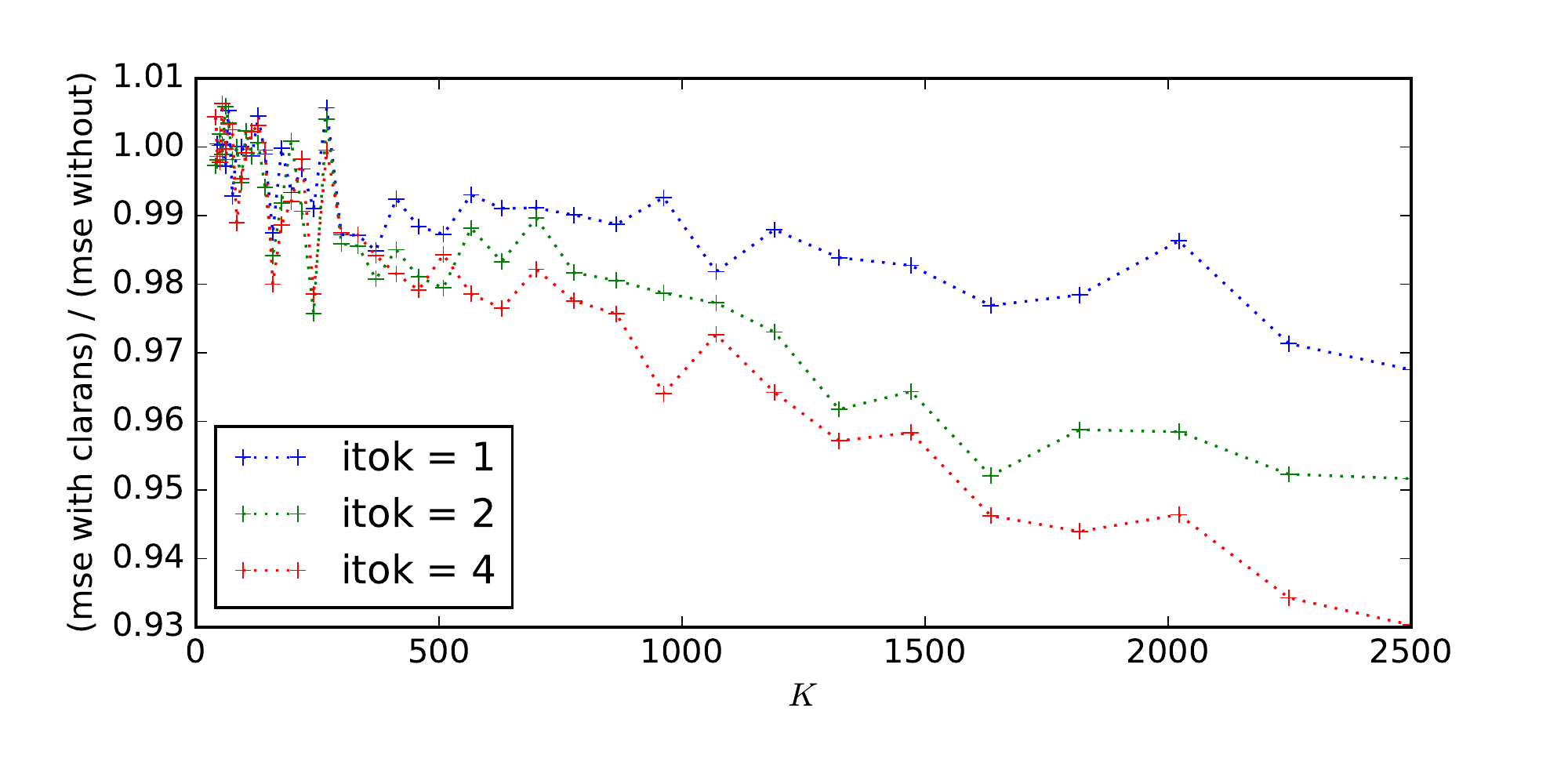}
\caption{Improvement obtained using \clarans{} for different values of $K$ (horizontal axis). The experimental setup is as follows. $N = 20,000$ points are drawn from a 3-D Gaussian with identity covariance. Then for each of $40$ values of $K$ on the horizontal axis, (1) \kmeanspp{} is run for fixed seed, and the time it takes to run is recorded (call it $T_{++}$). \clarans{} is then run for a mulitple `itok' of $T_{++}$, where `itok' is one of $\{0,1,2,4\}$. `itok' of $0$ corresponds to no \clarans{}. After \clarans{} has completed, \lloyd{} is run. For `itok' in ${1,2,4}$ the ratio of the final MSE with `itok' 0 (no \clarans{}) is plotted. This value is the fraction of the MSE without running \clarans{}. We see that the dependence of the improvement on $K$ is significant, with larger $K$ values benefitting more from \clarans{}. Also, as expected, larger `itok' results in lower MSE. }
\label{comparisonb}
\end{center}
\end{figure}

\end{document}